\newtheorem{theorem}{Theorem}
\newtheorem{example}{Example}
\newtheorem{lemma}{Lemma}
\algnewcommand\algorithmicforeach{\textbf{for each}}
\newcommand*{\addFileDependency}[1]{
\@addtofilelist{#1}

\IfFileExists{#1}{}{\typeout{No file #1.}}
}\makeatother
\begin{document}

\title{Generating Temporal Contact Graphs \\ Using Random Walkers}

\author{Anton-David~Almasan\orcidlink{0009-0009-1583-6840},
        Sergey Shvydun\orcidlink{0000-0002-6031-8614},
        Ingo Scholtes\orcidlink{0000-0003-2253-0216},
        and Piet Van Mieghem\orcidlink{0000-0002-3786-7922},~\IEEEmembership{Fellow,~IEEE}%
\IEEEcompsocitemizethanks{
\IEEEcompsocthanksitem David Almasan, Sergey Shvydun and Piet Van Mieghem are with the Faculty of Electrical Engineering, Mathematics and Computer Science, Delft University of Technology, 2628 CD Delft, The Netherlands.
\IEEEcompsocthanksitem Ingo Scholtes is with the Chair of Machine Learning for Complex Networks at the Center for Artificial Intelligence and Data Science (CAIDAS), Julius-Maximilians-Universität Würzburg, Emil-Fischer-Strasse 50
D-97074 Würzburg,
Germany.

\IEEEcompsocthanksitem (e-mail: A.D.Almasan@tudelft.nl; S.Shvydun@tudelft.nl; ingo.scholtes@uni-wuerzburg.de; P.F.A.VanMieghem@tudelft.nl).}%
}

\IEEEtitleabstractindextext{%
\begin{abstract}
We study human mobility networks through timeseries of contacts between individuals. Our proposed Random Walkers Induced temporal Graph (RWIG) model generates temporal graph sequences based on independent random walkers that traverse an underlying graph in discrete time steps. Co-location of walkers at a given node and time defines an individual-level contact. RWIG is shown to be a realistic model for temporal human contact graphs, which may place RWIG on a same footing as the Erdos-Renyi (ER) and Barabasi-Albert (BA) models for fixed graphs. Moreover, RWIG is analytically feasible: we derive closed form solutions for the probability distribution of contact graphs.
\end{abstract}
\begin{IEEEkeywords}
Temporal Networks, Generative Models, Network Dynamics, Markov Process, Random Walks.
\end{IEEEkeywords}}

\maketitle

\IEEEdisplaynontitleabstractindextext
\IEEEpeerreviewmaketitle

\IEEEraisesectionheading{\section{Introduction}\label{sec:introduction}}

\IEEEPARstart{I}{n} the past years, the study of temporal graphs has received a surge of interest, e.g. to model how time-varying human contact patterns impact epidemics like COVID-19 \cite{covid1,covid2,covid3}.
Empirical studies of real-world contact patterns have identified several characteristics of temporal graphs that can influence dynamical processes.

\begin{figure*}[!t]
\centering
\subfloat{\includegraphics[width=2.5in]{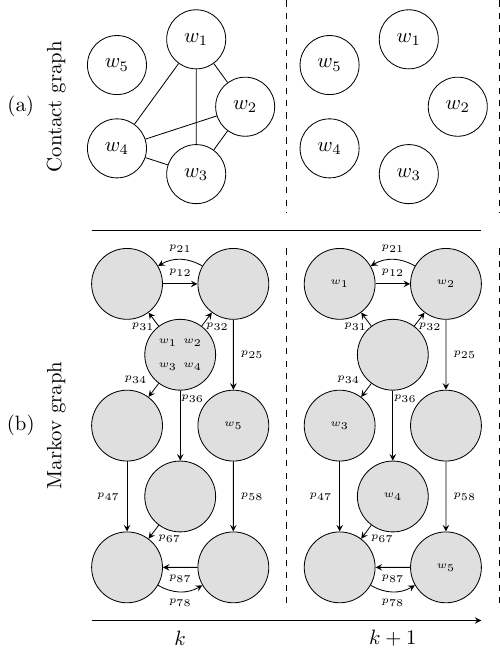}}%
\subfloat{\includegraphics[width=2.5in]{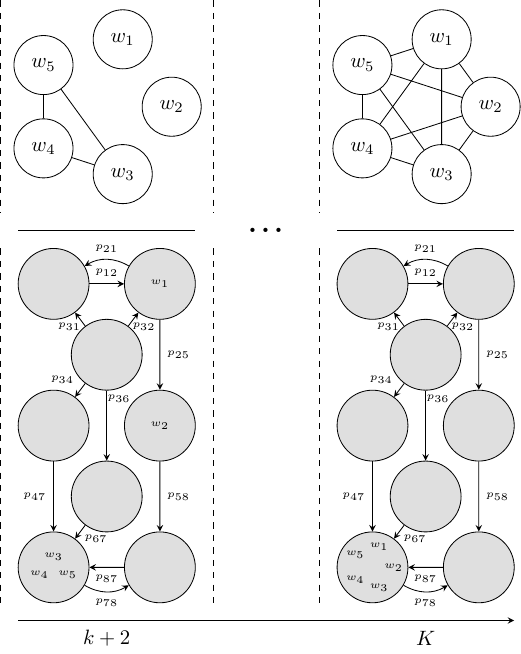}}%
\vspace{-0.25cm}\caption{RWIG. (a) Contact graphs. (b) Random walkers traversing the Markov graph.}
\label{fig:g_t realisation}
\end{figure*}

A first line of research has focused on the question how the \emph{temporal distribution of interactions} affects the evolution of dynamical processes in temporal graphs.
Studies on the influence of non-Poissonian and bursty node activity patterns \cite{Iribarren2009,Karsai2013,Takaguchi2013,Perotti2014} or long-lasting or concurrent interactions \cite{Masuda2014,Morris1995} have shown that real contact patterns can both slow down or speed up spreading processes compared to a static graph, where all links are always active.

A second line of research has addressed the question \emph{how the temporal ordering of interactions} influences dynamical processes such as diffusion or epidemic spreading.
In a nutshell, for two temporal contacts occurring between Alice and Bob at time $t$ and between Bob and Carol at time $t'$, a virus can only spread from Alice via Bob to Carol if the contact between Alice and Bob occurs \emph{before} the contact between Bob and Carol, i.e. iff $t{<}t'$.
If the temporal ordering of contacts is reversed, no \emph{time-respecting path} exists between Alice and Carol due to the directedness of the \emph{arrow of time}.
Empirical studies on social, biological, and technical systems \cite{Lentz2013,Pan2011,Holme2012} have shown that the \emph{causal topology} of temporal graphs, i.e. who can influence whom via time-respecting paths, is more complex than what we expect from their static, time-aggregated counterparts, leading to non-trivial effects such as a speed up or slow down of diffusion processes compared to (randomized) temporal graphs, that lack correlations in the temporal ordering of interactions \cite{Pfitzner2013,Scholtes2014,Rosvall2014,Lentz2016}.

Several temporal graph modelling and learning approaches have been proposed that account for some of the complex characteristics of empirical contact patterns 
\cite{Bois2015,Zhou2020,Zeno2021,Longa2024,Scholtes2017,Lambiotte2019}. 
Recently, a system theoretical approach towards emulating temporal graphs is presented in \cite{Shvydun_Van Mieghem2024}. 
Various approaches to model human mobility, which could explain some of the temporal characteristics of human contact patterns, are discussed in \cite{Barbosa2018,Chang2022,Panisson2012,Mauro2022}. Beside epidemic spreading, a better understanding of temporal mechanisms can also facilitate the design, management and control of mobile opportunistic networks \cite{Kui2018} or human mobility in public transportation networks \cite{Huyn2022}.
However, we still lack simple \emph{generative models for temporal graphs} that (i) are able to reproduce realistic contact patterns, (ii) facilitate analytic treatment and (iii) shed light on potential mechanisms that shape both the topological and temporal dimension of temporal graphs.

Addressing this research gap, we propose the \textit{Random Walkers Induced temporal Graph} (\textit{RWIG}) model, which uses multiple random walkers on a finite graph as a generative model for temporal contact networks. 
Any realization of a discrete-time Markov process on $N$ states can be represented by a random walk on the corresponding Markov graph with $N$ nodes (states), where a link between two states $i$ and $j$ is characterised, i.e. both directed and weighted, by the transition probability $p_{ij}$. 
The RWIG model considers a collection of $M$ random walkers that simultaneously traverse the Markov graph in discrete timesteps according to the $N{\times}N$ Markov transition probability matrix $P$ with elements equal to the transition probabilities $p_{ij}$. Hence, each walker executes a realization of the same Markov process or, equivalently, each walker's trajectory is driven by the Markov process. 
Thus, we assume in RWIG that the Markov process generates human mobility trajectories over a set of places (states). 
Next to the Markov graph, at discrete time $k$, the contact graph $G_k$ with $M$ nodes is generated, in which the nodes represent the random walkers. The main assumption of RWIG is that \textit{links in the contact graph $G_k$ are created between walkers which visit the same state in the Markov graph at discrete time $k$}.
Figure \ref{fig:g_t realisation} exemplifies $M{=}5$ random walkers, who traverse a Markov graph with $N{=}8$ states (shaded) in discrete-time steps according to the transition probabilities $p_{ij}$. The probabilities $p_{ij}$ are depicted in Figure \ref{fig:g_t realisation}(b) on the links between Markov states. The observation window has length $K$ as displayed on the horizontal axis and four discrete time steps are shown.  In Figure \ref{fig:g_t realisation}(a), RWIG generates the contact graph of the 5 walkers at each timestep by creating links between all walkers found in the same state in the Markov graph. For instance, at time $k{+}2$, walkers $w_3, w_4$ and $w_5$ are in the same state and thus form a fully connected subgraph or clique in the contact graph $G_{k+2}$, separated from the single node cliques of walker $w_1$ and $w_2$. 

A physical interpretation of RWIG is a collection of individuals moving through space. The underlying graph with adjacency matrix \(A\) represents a city map, with nodes as various locations (e.g. restaurants, workplaces, homes, public transport stations, etc) and links as physical paths between locations. The random walkers represent individuals and the transition probabilities $p_{ij}$ assume that all individuals behave the same.

We can regard the probabilities $p_{ij}$, which together form an $N{\times}N$ transition probability matrix $P$, as a common \textit{policy}, which all individuals follow. The transition probability matrix $P$ can generally take the form of any function $f(A)$ of the adjacency matrix $A$ which results in a stochastic matrix \cite{PVM_graphspectra_second_edition}. An example in which the probabilities of jumping from a state $i$ to any other adjacent state $j$ are all equal is $P=\Delta^{-1}A$, where $\Delta$ is the diagonal matrix of the degree vector of the underlying graph with adjacency matrix $A$.

As a common policy is restrictive and often unrealistic (e.g. a kindergartener would visit different locations than an office worker), we consider that each random walker $w_r$ can have a different policy, or transition probability matrix $P_r$. All policies, however, still reflect the same underlying graph topology (e.g. city map). Consequently, if there is no link between two states $i$ and $j$ in the adjacency matrix $A$, then all policies must have a zero probability for state transitions between nodes $i$ and $j$ (i.e. $a_{ij}{=}0$ implies that $(P_r)_{ij}{=}0$, for all integers $i, j \in \{1, ..., N\}$).

Although the properties of random walks have been extensively studied, the dynamics of multiple random walks on a graph still represents an active research area. Riascos and Sanders \cite{meanencountertimes} study multiple non-interactive random walkers on a graph and analyse the mean encounter times of walkers. 
A similar model is proposed to generate contacts between individuals in \cite{markovian random walk model of epidemic spreading}, which are then used to study the evolution of epidemics. Masuda \textit{et al}. \cite{comprehensiveRWstudy} present a detailed study of the theory and applications of random walks. To the best of our knowledge, RWIG is the first model which leverages multiple random walks to generate temporal graphs.
Our contribution can be summarised:
\begin{itemize}[noitemsep,topsep=0pt]
 \item We propose the RWIG model based on random walkers for generating temporal contact networks.
  \item We provide an analytical formula for the probability distribution of the contact graphs, which are produced by RWIG given the transition matrices $\{P_r\}_{r=1}^M$ and the initial states of all walkers.
  \item We demonstrate how RWIG is able to generate contact graphs that resemble real temporal networks.
\end{itemize}
The paper is organised as follows. In Section \ref{section: G_t enum main section}, we describe the state space and topological structure of contact graphs. Section \ref{section: main section} provides an analytical formula for the probability distribution of the contact graph formed by a set of walkers, conditioned on the walkers' initial states and policies. Section \ref{main section: steady-state graphs} discusses RWIG in the steady-state. To motivate the applicability of RWIG, Section \ref{section: data} offers simulation results illustrating the wide variety of contact graphs produced by RWIG and compares the RWIG generated sequences with empirical data. Finally, we introduce the notation to the reader in Appendix \ref{apx: symbols} and mathematical definitions are deferred to Appendix \ref{apx:combinatorics}.

\section{Random Walkers Induced temporal Graph (RWIG)}\label{section: G_t enum main section}
\subsection{Formulation of RWIG}
We consider an undirected unweighted graph with $N$ nodes and $L$ links that is represented by an $N{\times}N$ adjacency matrix $A$, which is the underlying graph. A Markov graph that emulates a random walk on that graph has the $N{\times}N$ probability transition matrix $P$. For instance, the transition probability matrix $P{=}\Delta^{-1}A$, where $\Delta{=}\left(d_{1},d_{2},{\ldots},d_{N}\right)  $ and $d_{i}$ is the degree of node $i$, describes a Markov graph \cite[p. 108-110]{PVM_graphspectra_second_edition} in which there is an equal probability to reach neighbouring states. On that Markov graph, $M$ random walkers, independently of each other, jump from one state to another state per discrete time $k$, starting from $k{=}0$ until some finite time $k{=}K$, according to the $N{\times}N$ probability transition matrix $P$. The trajectory of each random walker $w_j{\in}\left\{w_1,{\ldots},w_M\right\}$ across the states of the Markov graph can be regarded as one realization of the Markov process \cite{perfanalysis}, that starts in the state described by the \(1{\times}N\) vector $s_{j}\left[0\right]$.
\subsection{State space of RWIG}\label{section: G_t enumeration}
The fundamental assumption of RWIG is that any pair of walkers that meets at time $k$ in the same Markov state is connected in the contact graph $G_k$. In other words, if $q$ walkers reside in the same state in the Markov graph at discrete time $k$, they form a fully connected subgraph, i.e. clique of size $q$ in the contact graph $G_k$. Consequently, the graph $G_k$ consists of the union of disconnected cliques and $G_k$ is only connected and equal to a complete graph $K_M$ if all $M$ walkers meet in the same state. The induced structure describes the contact graph through pairwise disjoint subsets of walkers, which is exemplified in Figure \ref{fig:cliques as subsets}.

\begin{figure*}[!t]
\centering
\subfloat{\includegraphics[width=2.75in]{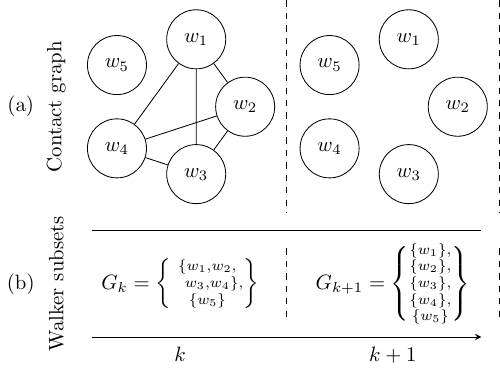}}%
\subfloat{\includegraphics[width=2.75in]{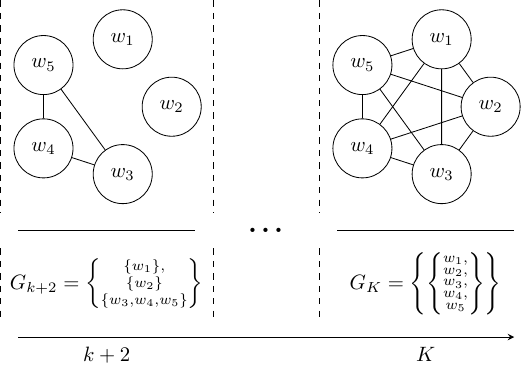}}%
\vspace{-0.15cm}\caption{Cliques (a) as partitions on the walker set (b).}
    \label{fig:cliques as subsets}
\end{figure*}

The union of the walker subsets in the node set of contact graph $G_k$ equals the complete walker set $\mathcal{M} = \{w_1, w_2, ..., w_M\}$. Since the subsets are pairwise disjoint, each possible contact graph generated by RWIG is equivalent to a partition on the walker set $\mathcal{M}$, whose number of cells is equal to the number of disconnected cliques. Thus, we refer to partitions on the walker set $\mathcal{M}$ and contact graphs interchangeably. Additionally, we also refer to $m$-partitions on the walker set $\mathcal{M}$ and $m$-clique contact graphs (i.e. a contact graph with $m$ cliques) interchangeably.

To count the number of possible contact graphs, consider an $m$-clique contact graph $G_k$ at some time $k$, which is equivalent to an $m$-partition $\pi_m$ on the walker set $\mathcal{M}$. In the contact graph, $M$ walkers occupy $m$ different states, where $m \leq M$. Additionally, the number of occupied states $m$ is upper bounded by the total number of states $N$ in the Markov graph. Therefore, the number of states occupied by walkers $m$ is upper bounded by $min(N,M)$. The total number of contact graphs $|G_k|$ is obtained by summing the number of all possible $m$-partitions
\begin{align}
    |G_k| = \sum_{m=0}^{min(N,M)}\mathcal{S}_M^{(m)},
\end{align}
where $\mathcal{S}_M^{(m)}$ are the \textit{Stirling numbers of the second kind}  \cite{book_abramowitz}.

If the number of walkers does not exceed the number of walker states (i.e. $M{\leq}N$), $m$ is upper bounded by $min(N,M){=}M$. Therefore, the total number of partitions on the walker set $\mathcal{M}$ and consequently, the total number of contact graphs is $|G_k|{=} \sum_{m=0}^M\mathcal{S}_M^{(m)}{=}\mathcal{B}_M$, where $\mathcal{B}_M$ is the \textit{$M$-th Bell number}. The Bell numbers are explained in Appendix \ref{apx:combinatorics}. Table \ref{tab: gk count for all mn} illustrates a few examples of the number of contact graphs for various combinations of walker count $M$ and number of Markov states $N$, where the regime $M{\leq}N$ is shaded. For instance, if \(M{=}5\) and \(N{=}3\), the total number of contact graphs is 41 as there are 25 ways for 5 walkers to occupy 3 states and form a 3-clique graph, 15 ways to occupy 2 states (\(G_k\) is a 2-clique graph) and only 1 way to be in the same state (\(G_k\) is a complete graph).

\begin{table}[!t]
    \centering
        \caption{Examples of contact graph state space cardinality \\with respect to $M$ walkers and $N$ states.}\vspace{-0.25cm}
    \label{tab: gk count for all mn}
    \begin{tabular}{|l||*{6}{c|}}\hline
    \backslashbox{$M$}{$N$}
    &\makebox[1em]{5}&\makebox[1em]{6}&\makebox[1em]{7}&\makebox[1em]{8}&\makebox[1em]{9}&\makebox[1em]{10}\\\hline\hline
    \makebox[3em]{1}&\cellcolor{gray!20}1&\cellcolor{gray!20}1&\cellcolor{gray!20}1&\cellcolor{gray!20}1&\cellcolor{gray!20}1&\cellcolor{gray!20}1\\\hline
    \makebox[3em]{2}&\cellcolor{gray!20}2&\cellcolor{gray!20}2&\cellcolor{gray!20}2&\cellcolor{gray!20}2&\cellcolor{gray!20}2&\cellcolor{gray!20}2\\\hline
    \makebox[3em]{3}&\cellcolor{gray!20}5&\cellcolor{gray!20}5&\cellcolor{gray!20}5&\cellcolor{gray!20}5&\cellcolor{gray!20}5&\cellcolor{gray!20}5\\\hline
    \makebox[3em]{4}&\cellcolor{gray!20}15&\cellcolor{gray!20}15&\cellcolor{gray!20}15&\cellcolor{gray!20}15&\cellcolor{gray!20}15&\cellcolor{gray!20}15\\\hline
    \makebox[3em]{5}&\cellcolor{gray!20}52&\cellcolor{gray!20}52&\cellcolor{gray!20}52&\cellcolor{gray!20}52&\cellcolor{gray!20}52&\cellcolor{gray!20}52\\\hline
    \makebox[3em]{6}&202&\cellcolor{gray!20}203&\cellcolor{gray!20}203&\cellcolor{gray!20}203&\cellcolor{gray!20}203&\cellcolor{gray!20}203\\\hline
    \makebox[3em]{7}&855&876&\cellcolor{gray!20}877&\cellcolor{gray!20}877&\cellcolor{gray!20}877&\cellcolor{gray!20}877\\\hline
    \makebox[3em]{8}&3845&4111&4139&\cellcolor{gray!20}4140&\cellcolor{gray!20}4140&\cellcolor{gray!20}4140\\\hline
    \makebox[3em]{9}&18002&20648&21110&21146&\cellcolor{gray!20}21147&\cellcolor{gray!20}21147\\\hline
    \makebox[3em]{10}&86472&109299&115179&115929&115974&\cellcolor{gray!20}115975\\\hline
    \end{tabular}

\end{table}

Therefore, if the number of walkers does not exceed the number of states $M{\leq}N$, the number of contact graphs formed by $M$ random walkers is equal to the Bell number $\mathcal{B}_M$. Otherwise, if $M{>}N$, then we omit $m$-partitions on the walker set $\mathcal{M}$ where $m{>}N$ because walkers cannot be found in more cliques than there are Markov states.

\subsection{Contact graph probability: Examples}
After the enumeration of contact graphs in Section \ref{section: G_t enumeration}, we now seek to find the probability distribution of the contact graphs $G_k$, conditioned on the initial state vector $s_{j}[0]$ of walker $w_j$ and Markov transition matrices $P_{j}$ for each walker $w_j \in \{w_1, w_2,\ldots, w_M\}$. 

A contact graph realisation with $m$ cliques is denoted as $g = \{\mathcal{A}_{1}, \mathcal{A}_{2},..., \mathcal{A}_{m}\}$, where $\mathcal{A}_{i}$ for all $i \in \{1, 2, ..., m\}$, represent the cliques formed at a discrete time step. Due to the equivalence between contact graphs and partitions on the walker set $\mathcal{M}$ shown in Figure \ref{fig:cliques as subsets}, the cliques $\mathcal{A}_{i}$ are functionally subsets of walkers found to be in the same state in the Markov graph at a given time.

We also introduce the set of initial conditions for all walkers: $\mathbf{s}_{\mathcal{M}}[0]{=}\{s_{j}[0]\}_{j=1}^M$, as well as the set of $N{\times}N$ transition probability matrices for all walkers $\textbf{P}_{\mathcal{M}}{=}\{P_{j}\}_{j=1}^M$.

\subsubsection{Introductory example}\label{section: introductory}
The simplest contact graph example is the complete graph $G_k{=}\{\mathcal{M}\}$, where all walkers are found in the same Markov state at discrete time $k$. 

The random variable $X_{j}[k]$ denotes the state in the Markov graph of walker $w_j$ at discrete time $k$ and $\Pr[X_{j}[k]{=}i]$ is the probability that walker $w_j$ is in state $i$ in the Markov graph at discrete time $k$. The $i$-th element of the probability state vector $s_{j}[k]$ for walker $w_j$ at time $k$ is then $(s_{j}[k])_i{=}\Pr[X_{j}[k]{=}i]$. Only if all $M$ walkers are in the same state at discrete time $k$, a complete graph $K_M$ is formed. The probability that all $M$ walkers are in state $i$ equals $\prod_{j=1}^M \Pr[X_{j}[k]{=}i]$, because all random walkers move independently of each other in the Markov graph. Summing the probabilities that all walkers are in state $i$ over all states $i \in \{1, 2, ..., N\}$ results in the probability that a complete graph $G_k{=}\{\mathcal{M}\}{\equiv}K_M$ is created at discrete time $k$:
\begin{align*}
    \Pr[G_k{=}\{\mathcal{M}\}] = \sum_{i=1}^N \prod_{j=1}^M \Pr[X_{j}[k] {=} i]
    =\sum_{i=1}^N \prod_{j=1}^M (s_{j}[k])e_i^T,
\end{align*}
where $e_i$ is the all-zero row vector with 1 at $i$-th position \cite{perfanalysis}. Introducing the Hadamard product \cite{PVM_graphspectra_second_edition} of the walkers' state probability vectors $s_{1}[k] \circ {...} \circ s_{M}[k] = \bigodot_{j=1}^M s_{j}[k]$:
\begin{align*}
    \Pr[G_k{=}\{\mathcal{M}\}] {=} \sum_{i=1}^N \prod_{j=1}^M (s_{j}[k])e_i^T{=}\sum_{i=1}^N  \left(\bigodot_{j=1}^M s_{j}[k]\right)e_i^T.
\end{align*}
Finally, introducing the all-ones vector $u = [1, \ldots, 1]$ yields:
\begin{align}\label{eq:sigma_m}
    \Pr[G_k{=}\{\mathcal{M}\}]  
    &=\left(\bigodot_{j=1}^M s_{j}[k]\right)u^T{=}\bigodot_{j=1}^M \left(s_{j}[0] P_{j}^k\right) u^T,
\end{align}
where we have used the $k$-step Markov transition probability formula \cite{perfanalysis}: $s_{j}[k] = s_{j}[0]P_{j}^k$.

Equation (\ref{eq:sigma_m}) expresses the probability that the $M$ walkers are in the same state in the Markov graph at discrete time $k$. Since $\mathcal{M}$ is just an example of any $M$ size walker set, equation (\ref{eq:sigma_m}) is also directly applicable to any walker subset $\mathcal{A}_{i} \subseteq \mathcal{M}$, where $i\in \{1, 2, ..., m\}$. We thus define $\sigma_{\mathcal{A}_{i}}[k]$ as the probability that walkers of a subset $\mathcal{A}_{i} \subseteq \mathcal{M}$ are in the same state at discrete time $k$:
\begin{equation}\label{eq:sigmaA}
      \sigma_{\mathcal{A}_{i}}[k] = \bigodot_{w_j \in \mathcal{A}_{i}} \left(s_{j}[0]P_{j}^k\right) u^T.
\end{equation}

The implementation of equation (\ref{eq:sigmaA}) is provided in Appendix \ref{apx: thm1 algo} (Algorithm \ref{alg:calc sigma}). The definition of $\sigma_{\mathcal{A}_{i}}[k]$ in (\ref{eq:sigmaA}) constitutes the basis of our further analysis, because equation (\ref{eq:sigmaA}) forms a compact and analytically tractable formula relating contact graph probabilities to the transition probability matrices and initial conditions. 

Equation (\ref{eq:sigma_m}) calculates the probability that all walkers are in the same state, which is equivalent to the probability of the $1$-clique contact graph or the complete graph $K_M$. To offer insight into the probability of contact graphs with more than one clique, we first calculate the probabilities of the $2$-clique and $3$-clique contact graphs. We then state and prove in Section \ref{section: main section} our main theorem for the probability of a general $m$-clique contact graph.

\subsubsection{2-clique contact graph}
    Let $g$ be a 2-clique contact graph realisation $g = \{\mathcal{A}_1, \mathcal{A}_2\}$. We consider that the walkers in cliques $\mathcal{A}_1$ and $\mathcal{A}_2$ are in Markov states $i$ and $j$ respectively. Summing over all states $i, j$ where $i \neq j$, the probability $\Pr[G_k {=} g]$ is:
\begin{align}\label{eq: 2 cliques}
    \Pr[G_k {=} g] &{=} \sum_{i=1}^N \sum_{\substack{j=1\\ j \neq i}}^N \left(\prod_{w_u \in \mathcal{A}_1} s_u[k]e_i^T  \right) \left(\prod_{w_v \in \mathcal{A}_2} s_v[k]e_j^T \right) \nonumber \\
    &{=} \sum_{i=1}^N \left(\prod_{w_u \in \mathcal{A}_1} s_u[k]e_i^T \right) \sum_{\substack{j=1\\ j \neq i}}^N \left(\prod_{w_v \in \mathcal{A}_2} s_v[k]e_j^T \right).
\end{align}
We rewrite the second sum-product term as:
\begin{align*}
    \sum_{\substack{j=1\\ j \neq i}}^N \prod_{w_v {\in} \mathcal{A}_2} s_v[k]e_j^T {=}\sum_{j=1}^N \left(\prod_{w_v {\in} \mathcal{A}_2} s_v[k]e_j^T  \right){-}\prod_{w_v {\in} \mathcal{A}_2}  s_v[k]e_i^T.
\end{align*}
Introducing the definition of $\sigma_\mathcal{A}[k]$ in (\ref{eq:sigmaA}) yields:
\begin{align}\label{substitution: sigmaA}
        \sum_{\substack{j=1\\ j \neq i}}^N \prod_{w_v \in \mathcal{A}_2} s_v[k]e_j^T &= \bigodot_{w_v \in \mathcal{A}_2} \left(s_{v}[0]P_v^k  \right)u^T - \prod_{w_v \in \mathcal{A}_2}  s_v[k]e_i^T\nonumber\\
        &= \sigma_{\mathcal{A}_2}[k] - \prod_{w_v \in \mathcal{A}_2} s_v[k]e_i^T.
\end{align}
Substituting (\ref{substitution: sigmaA}) into (\ref{eq: 2 cliques}):
\begin{align}
    \Pr[G_k {=} g]
    &=\sigma_{\mathcal{A}_2}[k] \sum_{i=1}^N \left(\prod_{w_u \in \mathcal{A}_1} s_u[k]e_i^T \right)\nonumber\\
    &\ \ \ - \sum_{i=1}^N \left(\prod_{w_u \in \mathcal{A}_1} s_u[k]e_i^T \prod_{w_v \in \mathcal{A}_2} s_v[k]e_i^T \right).\nonumber
\end{align}
Since $\mathcal{A}_1$ and $\mathcal{A}_2$ are complements w.r.t. the walker set $\mathcal{M}$, then $\mathcal{A}_1 \cup \mathcal{A}_2 = \mathcal{M}$ and thus:
\begin{align}
    \prod_{w_u \in \mathcal{A}_1} s_u[k]e_i^T \prod_{w_v \in \mathcal{A}_2} s_v[k]e_i^T = \prod_{w_u \in \mathcal{M}} s_u[k]e_i^T.\nonumber
\end{align}
Finally, by (\ref{eq:sigma_m}) and (\ref{eq:sigmaA}):
\begin{align}\label{eq:2cliques closedform 1}
    \Pr[G_k{=}g] &= \sigma_{\mathcal{A}_2}[k] \sum_{i=1}^N \prod_{w_u \in \mathcal{A}_1} s_u[k]e_i^T   {-} \sum_{i=1}^N \prod_{w_u \in \mathcal{M}} s_u[k]e_i^T \nonumber\\
    &=\sigma_{\mathcal{A}_1}[k]\sigma_{\mathcal{A}_2}[k] - \sigma_\mathcal{M}[k].
\end{align}

The intuition behind (\ref{eq:2cliques closedform 1}) is the inclusion-exclusion principle \cite[p. 10-12]{perfanalysis}, where the probability of 2 cliques is equal to the probability that walkers from the subsets $\mathcal{A}_1$ and $\mathcal{A}_2$ are each found in the same states minus (hence, excluding) the probability that all walkers are in the same state.
\subsubsection{3-clique contact graph}\label{section: 3-clique contact graph}
Let $g$ be a 3-clique contact graph realisation $g{=}\{\mathcal{A}_1, \mathcal{A}_2, \mathcal{A}_3\}$. We consider that the walkers in clique $\mathcal{A}_1$ are in Markov state $i_1$, the walkers in clique $\mathcal{A}_2$ are in Markov state $i_2$ and that the walkers in clique $\mathcal{A}_3$ are in Markov state $i_3$. Summing over all states $i_1, i_2, i_3$ where $i_1 \neq i_2 \neq i_3$, the probability of the realisation $g$ is:
\begin{align}\label{eq: 3 cliques}
    \Pr[G_k {=} g] &= \sum_{i_1=1}^N \sum_{\substack{i_2=1\\ i_2 \notin \{i_1\}}}^N \sum_{\substack{i_3=1\\ i_3 \notin \{i_1, i_2\}}}^N \prod_{j=1}^3 \prod_{w_{u} \in \mathcal{A}_j} s_{u}[k]e_{i_j}^T .
\end{align}

Expanding equation (\ref{eq: 3 cliques}) is possible by observing that the probability $\Pr[G_k {=} g]$ is equal to the product of clique probabilities minus the probability of \textit{any contact graph obtained by amassing cliques} (e.g. $\mathcal{A}_1 {\cup} \mathcal{A}_2$ or $\mathcal{A}_1 {\cup} \mathcal{A}_2 {\cup} \mathcal{A}_3$). In other words, the event that walkers from each of the cliques $\mathcal{A}_1$, $\mathcal{A}_2$, $\mathcal{A}_3$  are found in the same state, which has probability $\sigma_{\mathcal{A}_1}[k] \sigma_{\mathcal{A}_2}[k] \sigma_{\mathcal{A}_3}[k]$, encompasses the 3 events: walkers occupy the same state, walkers occupy two different states, walkers occupy three different states. The probability of a contact graph with three cliques $g {=} \{\mathcal{A}_1$, $\mathcal{A}_2$, $\mathcal{A}_3\}$ is
    \begin{align*}
        \Pr[G_k {=} g] = & \sigma_{\mathcal{A}_1}[k] \sigma_{\mathcal{A}_2}[k] \sigma_{\mathcal{A}_3}[k] - \Pr[G_k {=} \{\mathcal{M}\}] \nonumber \\
        &- \Pr[G_k {=} \{\mathcal{A}_1 \cup \mathcal{A}_2, \mathcal{A}_3\}] \nonumber \\
        &- \Pr[G_k {=} \{\mathcal{A}_1 \cup \mathcal{A}_3, \mathcal{A}_2\}] \nonumber \\
        &- \Pr[G_k {=} \{\mathcal{A}_2 \cup \mathcal{A}_3, \mathcal{A}_1\}].
    \end{align*}
Denoting amassed cliques as $\mathcal{A}_i \cup \mathcal{A}_j = \mathcal{A}_{ij}$:
\begin{align}\label{eq: 3cliques recursive}
    \Pr[G_k {=} g] = & \sigma_{\mathcal{A}_1}[k] \sigma_{\mathcal{A}_2}[k] \sigma_{\mathcal{A}_3}[k] - \Pr[G_k {=} \{\mathcal{M}\}] \nonumber \\
    &- \Pr[G_k {=} \{\mathcal{A}_{12}, \mathcal{A}_3\}] \nonumber \\
    &- \Pr[G_k {=} \{\mathcal{A}_{13}, \mathcal{A}_2\}] \nonumber \\
    &- \Pr[G_k {=} \{\mathcal{A}_{23}, \mathcal{A}_1\}].
\end{align}
The sigma notation in (\ref{eq:sigmaA}) extends to amassed cliques as
    $\sigma_{\mathcal{A}_{i_1, i_2, {...}, i_m}}[k] = \sigma_{\mathcal{A}_{i_1} \cup \mathcal{A}_{i_2} \cup \ldots \cup \mathcal{A}_{i_m}}[k]$ and yields:
\begin{align}\label{eq: 3cliques closedform}
    \Pr[G_k{=}g] &= \sigma_{\mathcal{A}_1}[k] \sigma_{\mathcal{A}_2}[k] \sigma_{\mathcal{A}_3}[k] -\sigma_{\mathcal{M}}[k]
    \nonumber \\
    &\ \ \ 
    -(\sigma_{\mathcal{A}_{12}}[k]\sigma_{\mathcal{A}_3}[k] {-} \sigma_{\mathcal{M}}[k]) 
    \nonumber \\
    &\ \ \ 
    -(\sigma_{\mathcal{A}_{13}}[k]\sigma_{\mathcal{A}_2}[k] {-} \sigma_{\mathcal{M}}[k]) 
    \nonumber \\
    &\ \ \ 
    -(\sigma_{\mathcal{A}_{23}}[k]\sigma_{\mathcal{A}_1}[k] {-} \sigma_{\mathcal{M}}[k]) 
    \nonumber \\
    &=\sigma_{\mathcal{A}_1}[k] \sigma_{\mathcal{A}_2}[k] \sigma_{\mathcal{A}_3}[k] {-} \sigma_{\mathcal{A}_{12}}[k]\sigma_{\mathcal{A}_3}[k]
    \nonumber \\
    &\ \ \ {-}  \sigma_{\mathcal{A}_{13}}[k]\sigma_{\mathcal{A}_2}[k] {-}  \sigma_{\mathcal{A}_{23}}[k]\sigma_{\mathcal{A}_1}[k]
    {+} 2\sigma_{\mathcal{M}}[k].
\end{align}

The probability of a 4-clique contact graph is provided in Appendix \ref{apx: 4 clique graph proba}.

\section{Contact graph probability distribution}\label{section: main section}
Let $g$ be any $m$-clique contact graph: $g = \{\mathcal{A}_1, \mathcal{A}_2, \ldots, \mathcal{A}_m\}$. Equations (\ref{eq: 2 cliques}) and (\ref{eq: 3 cliques}) can be extended to compute the probability of an $m$-clique contact graph. 

\begin{theorem}\label{thm: m cliques}
    The probability of an $m$-clique contact graph $g = \{\mathcal{A}_1, \mathcal{A}_2, ..., \mathcal{A}_m\}$ at discrete time $k$ is
    \begin{align}\label{eq: m cliques combinatorial}
    \Pr[G_k{=}g] = \sum_{i_1=1}^N \sum_{\substack{i_2=1\\i_2 \not\in \{i_1\}}}^N ... \sum_{\substack{i_m=1\\i_m \not\in \{i_l\}_{l=1}^{m-1}}}^N \prod_{j=1}^m \prod_{w_{u} \in \mathcal{A}_j} \left(s_{u}[k]\right)_{i_j}.
\end{align}
\end{theorem}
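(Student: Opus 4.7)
The approach I would take is a direct disjoint-decomposition argument, avoiding the recursive inclusion--exclusion style of the $2$-clique and $3$-clique derivations and instead producing (\ref{eq: m cliques combinatorial}) as the canonical combinatorial expansion of $\Pr[G_k = g]$. The plan is to write the event $\{G_k = g\}$ as the disjoint union, over ordered tuples $(i_1, \ldots, i_m)$ of \emph{pairwise distinct} Markov states, of the sub-event
\[
E_{i_1,\ldots,i_m} \;=\; \bigcap_{j=1}^{m}\,\bigcap_{w_u \in \mathcal{A}_j} \{X_u[k] = i_j\},
\]
then evaluate each $\Pr[E_{i_1,\ldots,i_m}]$ by walker-wise independence, and finally sum.

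First I would verify that the decomposition is correct and that the sub-events are genuinely disjoint. If $\{G_k = g\}$ occurs, each clique $\mathcal{A}_j$ must sit entirely at a single Markov state $i_j$ (else it would split), and the $i_j$ must be distinct (else two cliques would coalesce); so the realized assignment lies in exactly one $E_{i_1,\ldots,i_m}$ with distinct indices. Conversely, any $E_{i_1,\ldots,i_m}$ with distinct indices realizes the partition $g$ exactly. Because the $\mathcal{A}_j$ are distinct labeled subsets of $\mathcal{M}$, the map from a realized state-assignment to its tuple $(i_1,\ldots,i_m)$ is well defined, which gives the disjointness.

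Next I would compute each $\Pr[E_{i_1,\ldots,i_m}]$. Using independence of the walkers and the identity $(s_u[k])_i = \Pr[X_u[k] = i]$ introduced in Section~\ref{section: introductory},
\[
\Pr[E_{i_1,\ldots,i_m}] \;=\; \prod_{j=1}^{m}\,\prod_{w_u \in \mathcal{A}_j} \Pr[X_u[k]=i_j] \;=\; \prod_{j=1}^{m}\,\prod_{w_u \in \mathcal{A}_j} (s_u[k])_{i_j}.
\]
Summing over the admissible tuples and encoding the admissibility constraint as the nested conditions $i_2 \notin \{i_1\}$, $i_3 \notin \{i_1,i_2\}$, $\ldots$, $i_m \notin \{i_1,\ldots,i_{m-1}\}$ reproduces (\ref{eq: m cliques combinatorial}).

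I expect the only delicate point to be the distinctness constraint. Dropping it would silently include configurations in which two nominally separate cliques $\mathcal{A}_j$ and $\mathcal{A}_{j'}$ land on the same Markov state; such configurations realize a strictly coarser partition than $g$ and therefore do \emph{not} belong to $\{G_k = g\}$. The nested-sum notation is precisely what excludes these, and getting that bookkeeping explicit is the only non-routine part of the argument; independence of the walkers and the Hadamard-product identity from the introductory example take care of the rest.
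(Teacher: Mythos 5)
Your proposal is correct and follows essentially the same argument as the paper: decompose the event $\{G_k = g\}$ over ordered tuples of pairwise distinct Markov states assigned to the cliques, use walker independence to factor each term as $\prod_{j}\prod_{w_u \in \mathcal{A}_j}(s_u[k])_{i_j}$, and sum. The paper's proof is a two-sentence version of exactly this; your added care about disjointness and the distinctness constraint only makes explicit what the paper leaves implicit.
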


\begin{proof}
The probability that \(M\) walkers form a contact graph $g {=} \{\mathcal{A}_1, {\ldots}, \mathcal{A}_m\}$ at discrete time $k$ in \(m\) states \(i_1,{\ldots},i_m\), where \(|\{i_1,{\ldots},i_m\}|{=}m\), is equal to \(\prod_{j{=}1}^m \left(\prod_{w_{u} \in \mathcal{A}_j} \left(s_{u}[k]\right)_{i_j} \right)\). Summing over all different \(m\) states $\{i_j\}_{j{=}1}^m$ yields the probability of the realisation $g$.
\end{proof}

Theorem \ref{thm: m cliques} offers the probability of an $m$-clique contact graph from a combinatorial perspective. However, equation (\ref{eq: m cliques combinatorial}) requires to consider \(\frac{N!}{(N-m)!}\) combinations of states where \(M\) walkers may form cliques \(\mathcal{A}_1, \ldots, \mathcal{A}_m\), which lead to a combinatorial explosion for a large number of states \(N\). Therefore, we derive a closed form for $\Pr[G_k = g ]$.

\subsection{Amassed clique contact graphs}\label{section: main theorem}
We offer a formal definition of amassed clique graphs introduced in subsection \ref{section: 3-clique contact graph}, and subsequently illustrate how the process of amassing cliques allows us to  formulate our main theorem and expand equation (\ref{eq: m cliques combinatorial}).

Equation (\ref{eq: 3cliques recursive}) offers insight into the recursive nature of contact graphs probabilities and partitions: \textit{amassed clique graphs are a result of partitioning the contact graph $G_k$ and taking the union of walkers.} 
\begin{example}\label{example: partitioning a contact graph}
    By taking a 2-partition $\pi_2 = \{\mathcal{C}_1, \mathcal{C}_2\} = \{\{\mathcal{A}_1, \mathcal{A}_2\}, \{\mathcal{A}_3\}\}$ on the realisation $g {=} \{\mathcal{A}_1, \mathcal{A}_2, \mathcal{A}_3\}$ and taking the union of cliques $\mathcal{A}_1 \cup \mathcal{A}_2 {=} \mathcal{A}_{12}$, we obtain the amassed-clique contact graph $g(\pi_2) {=} \{\mathcal{A}_{12}, \mathcal{A}_3\}$.
Schematically, the generation of amassed clique contact graphs is shown in Figure \ref{fig: amassing process}.\vspace{-0.3cm}
\begin{figure}
    \centering
    \includegraphics[width=3.5in]{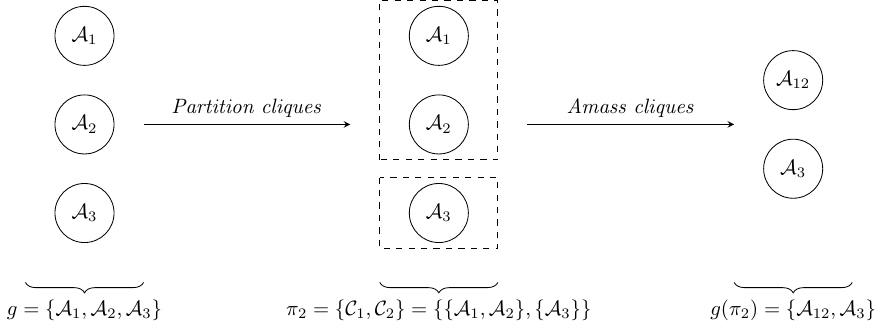}
    \vspace{-0.7cm}\caption{Process of creating amassed clique contact graphs.}
    \label{fig: amassing process}
\end{figure}
\end{example}
We call $g(\pi)$ the contact graph associated with partition $\pi$ on $g$. Naturally, the singleton partition $\pi_3 = \{\{\mathcal{A}_1\}, \{\mathcal{A}_2\}, \{\mathcal{A}_3\}\}$ has associated contact graph $g(\pi_3)=g=\{\mathcal{A}_1, \mathcal{A}_2, \mathcal{A}_3\}$. The rationale behind contact graphs generated by amassing cliques holds for any $m$-clique contact graph realisation $g {=} \{\mathcal{A}_1, \mathcal{A}_2, {...}, \mathcal{A}_m\}$, thus generalising equation (\ref{eq: 3cliques recursive}) to:
\begin{equation}\label{eq: kcliques recursive}
    \Pr[G_k {=} g ] = \prod_{k=1}^m \sigma_{\mathcal{A}_k} - \sum_{\pi \in \mathcal{P}^*_{g}} \Pr[G_k = g(\pi) ],
\end{equation}
where $\mathcal{P}_{g}$ is the set of all possible partitions on $g$ and $\mathcal{P}^{*}_{g}{=}\mathcal{P}_{g}{\setminus}\{\{\mathcal{A}_1\}, \{\mathcal{A}_2\}, {...}, \{\mathcal{A}_m\}\}$ excludes the singleton partition $\pi_m$.

We emphasize the distinction between partitions on a walker set and partitions on a contact graph. Recall from Section \ref{section: G_t enumeration} the equivalence relationship between contact graphs and partitions on the walkers set $\mathcal{M}$. Similarly, amassed clique graphs are a special class of contact graphs, which are obtained through partitioning cliques. We denote the difference by the symbol $\mathcal{C}$ for the cells of a partition on cliques and by $\mathcal{A}$ for the cells of a partition on the walker set $\mathcal{M}$. The caveat is illustrated in Figure \ref{fig: amassing process} of Example \ref{example: partitioning a contact graph}, where the cells of the 2-partition $\pi_2$ are: $\mathcal{C}_1 = \{\mathcal{A}_1, \mathcal{A}_2\}, \mathcal{C}_2 = \{\mathcal{A}_3\}$ and will be used in the proof of our main theorem. 

\subsection{Main theorem}
In equation (\ref{eq: kcliques recursive}), each $m$-clique graph realisation $g = \{\mathcal{A}_1, ..., \mathcal{A}_m\}$ probability depends on its \textit{associated sigma product} $\prod_{l=1}^m \sigma_{\mathcal{A}_l}[k]$, which allows us to reduce (\ref{eq: kcliques recursive}) to a closed form that depends only on sigma terms. Additionally, $\Pr[G_k {=} g ]$ also depends on the probability of graphs associated with all partitions on $g$. Thus, we are motivated to reduce equation (\ref{eq: kcliques recursive}) to a closed form:
\begin{align}\label{eq:sigma expansion}
    \Pr[G_k {=} g] = \sum_{\pi \in \mathcal{P}_{g}} \beta_m(\pi) \prod_{\mathcal{A} \in g(\pi)}\sigma_{\mathcal{A}}[k],
\end{align} 
where $\mathcal{A}$ is a clique in the amassed clique graph $g(\pi)$ (associated with partition $\pi$ on $g$), $\beta_m(\pi)\in \mathbb{Z}$ is the number of sigma product terms associated with contact graph $g(\pi)$ and subscript $m$ is the number of cliques in $g$. We call (\ref{eq:sigma expansion}) the \textit{sigma expansion} of (\ref{eq: kcliques recursive}) for contact graph $g$. We now state our main theorem: 
\begin{theorem}\label{thm: main theorem}
    The probability of an $m$-clique contact graph $g = \{\mathcal{A}_1, \mathcal{A}_2, ..., \mathcal{A}_m\}$ at discrete time $k$ is
    \begin{align}\label{eq: kcliques closed_form}
    \Pr[G_k {=} g] &= \sum_{\pi \in \mathcal{P}_{g}} \left(\prod_{\mathcal{C} \in \pi} ({-1})^{|\mathcal{C}|{-1}} (|\mathcal{C}|{-1})! \right)\prod_{\mathcal{A} \in g(\pi)}\sigma_{\mathcal{A}}[k],
\end{align}
where $|\mathcal{C}|$ denotes the number of cliques $\mathcal{A}$ in cell $\mathcal{C}$ of partition $\pi$ on $g = \{\mathcal{A}_1, \mathcal{A}_2, ..., \mathcal{A}_m\}$.
\end{theorem}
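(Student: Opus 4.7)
The plan is to recognise (\ref{eq: kcliques closed_form}) as M\"obius inversion on the lattice $\Pi_m$ of set partitions of $g = \{\mathcal{A}_1,\ldots,\mathcal{A}_m\}$. For each $\tau\in\Pi_m$ I introduce
\[ f(\tau) := \Pr[G_k = g(\tau)], \qquad F(\tau) := \prod_{\mathcal{A}\in g(\tau)}\sigma_{\mathcal{A}}[k], \]
so that, writing $\hat{0} = \{\{\mathcal{A}_1\},\ldots,\{\mathcal{A}_m\}\}$ for the finest partition, $g(\hat{0})=g$ and the left-hand side of (\ref{eq: kcliques closed_form}) is just $f(\hat{0})$. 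The coefficient $\prod_{\mathcal{C}\in\pi}(-1)^{|\mathcal{C}|-1}(|\mathcal{C}|-1)!$ is exactly the M\"obius-function value $\mu(\hat{0},\pi)$ for the partition lattice $\Pi_m$, so (\ref{eq: kcliques closed_form}) has the shape of the inversion identity $f(\hat{0}) = \sum_{\tau\in\Pi_m}\mu(\hat{0},\tau)F(\tau)$.

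The substantive step is to establish the summation identity
\[ F(\tau) \;=\; \sum_{\rho\succeq\tau} f(\rho), \]
where $\rho\succeq\tau$ denotes that $\rho$ coarsens $\tau$. Unpacking each factor via (\ref{eq:sigmaA}), $F(\tau)$ becomes a sum over assignments of one Markov state $i_{\mathcal{A}}$ to each amassed clique $\mathcal{A}\in g(\tau)$, of the product of probabilities that every walker in $\mathcal{A}$ is in state $i_{\mathcal{A}}$. Grouping these state-tuples by their coincidence pattern, i.e.\ by which amassed cliques of $g(\tau)$ happen to share a Markov state, yields precisely a partition $\rho$ that coarsens $\tau$, and the independence of walkers together with Theorem~\ref{thm: m cliques} identifies the contribution of each group with $f(\rho)$. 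This bookkeeping step, matching the coincidence structure on walker states with the covering relations in $\Pi_m$, is where I expect the main technical difficulty to lie.

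Given the summation identity, M\"obius inversion on $\Pi_m$ evaluated at $\hat{0}$ yields
\[ \Pr[G_k = g] \;=\; f(\hat{0}) \;=\; \sum_{\tau\in\Pi_m}\mu(\hat{0},\tau)\,F(\tau), \]
and inserting the standard evaluation $\mu(\hat{0},\tau) = \prod_{\mathcal{C}\in\tau}(-1)^{|\mathcal{C}|-1}(|\mathcal{C}|-1)!$ for the partition lattice reproduces (\ref{eq: kcliques closed_form}) verbatim. A strong induction on $m$ directly from the recursion (\ref{eq: kcliques recursive}) is a viable alternative, with base case $m=1$ trivial since $g=\{\mathcal{M}\}$ gives $\Pr[G_k=g]=\sigma_{\mathcal{M}}[k]$; however, collecting the coefficient of each product $\prod_{\mathcal{A}\in g(\tau)}\sigma_{\mathcal{A}}[k]$ reduces, via the defining recurrence $\sum_{\rho:\hat{0}\preceq\rho\preceq\tau}\mu(\hat{0},\rho)=\delta_{\hat{0},\tau}$, to exactly the same M\"obius calculation, so the inversion framing above is the cleaner route.
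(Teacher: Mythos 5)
Your proposal is correct, and it reaches (\ref{eq: kcliques closed_form}) by a genuinely different route than the paper. The paper starts from the recursion (\ref{eq: kcliques recursive}) and determines the coefficients $\beta_m(\pi)$ of the sigma expansion directly: Lemma \ref{lemma: trivial partition weight} isolates the coefficient of $\sigma_{\mathcal{M}}[k]$ by deriving the recursion $\beta_m(\pi_1)=-\sum_{l=1}^{m-1}\mathcal{S}_m^{(l)}\beta_l(\pi_1)$ and solving it through the orthogonality of Stirling numbers of the first and second kind (Lemma \ref{lemma: stirling recursion lemma}), and Lemma \ref{lemma: general} then extends this to an arbitrary $q$-partition by exploiting the independence of the walker sets in different cells, so that the weight factorises over cells. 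You instead prove the interval identity $\prod_{\mathcal{A}\in g(\tau)}\sigma_{\mathcal{A}}[k]=\sum_{\rho\succeq\tau}\Pr[G_k=g(\rho)]$ for every partition $\tau$ of $g$ (the paper only ever uses its $\tau=\hat{0}$ instance, which is (\ref{eq: kcliques recursive}) rearranged) and then invoke M\"obius inversion on the partition lattice $\Pi_m$ together with Rota's evaluation $\mu(\hat{0},\tau)=\prod_{\mathcal{C}\in\tau}(-1)^{|\mathcal{C}|-1}(|\mathcal{C}|-1)!$. Your coincidence-pattern argument for the interval identity is sound: conditioning on which amassed cliques of $g(\tau)$ happen to share a Markov state partitions the event into exactly the coarsenings $\rho\succeq\tau$, with Theorem \ref{thm: m cliques} supplying each term. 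What your framing buys is economy and transparency: the Stirling-orthogonality computation of the paper is absorbed into the standard M\"obius function of $\Pi_m$, the two lemmas collapse into one inversion step, and the coefficients are identified as canonical lattice data rather than solved for ad hoc; what the paper's route buys is self-containedness (no appeal to the M\"obius calculus, only to Stirling-number identities it derives in its own appendix) and an explicit recursive structure that mirrors the worked 2-, 3- and 4-clique examples. The only point you should make fully explicit if you write this up is that every coarsening of the walker partition induced by $g(\tau)$ arises as $g(\rho)$ for a unique partition $\rho$ of the cells of $\tau$, which is what licenses indexing the decomposition by $\rho\succeq\tau$ in $\Pi_m$.
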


While Theorem \ref{thm: main theorem} may not be immediately intuitive, it offers a considerable advantage in terms of the runtime. We record the execution time of calculating the probability distribution of RWIG graphs using both equations (\ref{eq: m cliques combinatorial}) and (\ref{eq: kcliques closed_form}) and present the results in Appendix \ref{apx: time complexity}. The pseudocode for equation (\ref{eq: kcliques closed_form}) in Theorem \ref{thm: main theorem} is provided in Appendix \ref{apx: thm1 algo} (Algorithm \textit{RWIG-pmf}).
Our proof of Theorem \ref{thm: main theorem} stems directly from Lemmas \ref{lemma: trivial partition weight} and \ref{lemma: general}, presented below.
\begin{lemma}\label{lemma: trivial partition weight}
    Let $\pi_1 = \{\mathcal{M}\}$ be the $1$-partition on the walker set $\mathcal{M}$. The number $\beta_m(\pi_1)$  of sigma product terms $\sigma_\mathcal{M}[k]$ in the sigma expansion formula (\ref{eq:sigma expansion}) for the probability of an $m$-clique contact graph depends only on the number of cliques $m$ as
    \begin{equation}
        \beta_m(\pi_1) = (-1)^{(m-1)}(m-1)!
    \end{equation}
\end{lemma}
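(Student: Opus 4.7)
The plan is to derive a recursion for $\beta_m(\pi_1)$ directly from equation (\ref{eq: kcliques recursive}) and then solve it by strong induction on $m$, using the orthogonality relation between Stirling numbers of the first and second kinds.

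First I would read off the coefficient of the single product $\sigma_{\mathcal{M}}[k]$ on both sides of (\ref{eq: kcliques recursive}), after rewriting each subtracted term $\Pr[G_k = g(\pi)]$ in its sigma-expansion form (\ref{eq:sigma expansion}). For $m \geq 2$, the leading product $\prod_{l=1}^m \sigma_{\mathcal{A}_l}[k]$ involves only factors $\sigma_{\mathcal{A}_l}$ with $\mathcal{A}_l \subsetneq \mathcal{M}$, and hence contributes nothing to the $\sigma_{\mathcal{M}}[k]$ coefficient. For each $\pi \in \mathcal{P}^{*}_{g}$, the amassed graph $g(\pi)$ is itself a $|\pi|$-clique contact graph on the same walker set $\mathcal{M}$, so by the sigma expansion applied to $g(\pi)$ the coefficient of $\sigma_{\mathcal{M}}[k]$ inside $\Pr[G_k = g(\pi)]$ is $\beta_{|\pi|}(\pi_1)$. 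Grouping the partitions $\pi \in \mathcal{P}^{*}_{g}$ by their number of cells $j = |\pi|$, and using the fact that the number of $j$-block partitions of an $m$-element set is $\mathcal{S}_m^{(j)}$, this yields the recursion
\begin{equation*}
    \beta_m(\pi_1) = -\sum_{j=1}^{m-1} \mathcal{S}_m^{(j)}\, \beta_j(\pi_1), \qquad m \geq 2,
\end{equation*}
together with the base case $\beta_1(\pi_1) = 1$ coming from (\ref{eq:sigma_m}).

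Since $\mathcal{S}_m^{(m)} = 1$, the recursion is equivalent to $\sum_{j=1}^m \mathcal{S}_m^{(j)} \beta_j(\pi_1) = \delta_{m,1}$, so the claimed formula $\beta_m(\pi_1) = (-1)^{m-1}(m-1)!$ reduces to the identity
\begin{equation*}
    \sum_{j=1}^m \mathcal{S}_m^{(j)} (-1)^{j-1}(j-1)! = \delta_{m,1}.
\end{equation*}
This is precisely the orthogonality relation between Stirling numbers of the two kinds: the signed Stirling number of the first kind satisfies $s(j,1) = (-1)^{j-1}(j-1)!$ (the number of permutations of $j$ elements with a single cycle, with sign), and the Stirling matrices of the two kinds are mutual inverses, encoding the change of basis between the monomial and falling-factorial bases of polynomials. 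Strong induction then closes the argument: assuming $\beta_j(\pi_1) = (-1)^{j-1}(j-1)!$ for every $j < m$, substituting into the recursion and invoking the identity lets one read off $\beta_m(\pi_1)$.

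The hard part will be the bookkeeping that equates coefficients of sigma-products on both sides of (\ref{eq: kcliques recursive}); one needs to argue that the collection $\{\sigma_{\mathcal{A}}[k] : \mathcal{A} \subseteq \mathcal{M}\}$ may be treated as algebraically independent placeholders, so that (\ref{eq:sigma expansion}) is a formal identity whose weights $\beta_m(\pi)$ are uniquely determined. If the Stirling orthogonality identity is unfamiliar, a self-contained alternative is an exponential generating function calculation: using $\sum_{m \geq j} \mathcal{S}_m^{(j)} \frac{x^m}{m!} = \frac{(e^x-1)^j}{j!}$, one finds $\sum_{j \geq 1} (-1)^{j-1}(j-1)!\, \frac{(e^x-1)^j}{j!} = \ln(e^x) = x$, whose only nonzero coefficient of $x^m/m!$ occurs at $m = 1$, giving the identity directly.
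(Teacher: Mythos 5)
Your proposal is correct and follows essentially the same route as the paper: you derive the identical recursion $\beta_m(\pi_1) = -\sum_{l=1}^{m-1}\mathcal{S}_m^{(l)}\beta_l(\pi_1)$ by grouping the non-singleton partitions in (\ref{eq: kcliques recursive}) by cell count, and then solve it via Stirling-number orthogonality, which is precisely what the paper does via its Lemma \ref{lemma: stirling recursion lemma} in Appendix \ref{apx: stirling recurssion theorem}. The only cosmetic difference is that you invoke the orthogonality relation in the form $\sum_{j}\mathcal{S}_m^{(j)}S_j^{(1)}=\delta_{m,1}$ (plus an optional EGF check), whereas the paper multiplies by $S_q^{(m)}$ and sums, using $\sum_{m=l}^{q}S_{q}^{(m)}\mathcal{S}_{m}^{(l)}=\delta_{lq}$; these are equivalent since the Stirling matrices are mutual inverses.
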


\begin{lemma}\label{lemma: general}
    Let $g$ be a  $m$-clique contact graph. Let $\pi_q = \{\mathcal{C}_1, {...}, \mathcal{C}_q\}$ be a $q$-partition on $g$, with $q < m$. Let the cardinality of each cell $\mathcal{C}_i$ be $c_i$. Let the number of sigma product terms $\prod_{i{=}1}^q \sigma_{\mathcal{C}_i}[k]$ in the sigma expansion formula of $g$ be $\beta_m(\pi_q)$. Then
    \begin{equation}
        \beta_m(\pi_q) = \prod_{i=1}^q (-1)^{c_i-1} (c_i-1)!
    \end{equation}
\end{lemma}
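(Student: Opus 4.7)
My plan is to prove Lemma \ref{lemma: general} by strong induction on $m$, the number of cliques in $g$, using the recursion (\ref{eq: kcliques recursive}) as the engine and Lemma \ref{lemma: trivial partition weight} as the boundary input. First I would rewrite (\ref{eq: kcliques recursive}) as
\begin{equation*}
    \prod_{l=1}^{m}\sigma_{\mathcal{A}_l}[k] \;=\; \sum_{\pi \in \mathcal{P}_g} \Pr[G_k = g(\pi)],
\end{equation*}
so that the leading sigma product becomes the singleton-partition ($\pi_m$) contribution on the right. For every $\pi \neq \pi_m$ the amassed graph $g(\pi)$ has strictly fewer than $m$ cliques, so the inductive hypothesis gives a sigma expansion of $\Pr[G_k = g(\pi)]$. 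Substituting these expansions into the recursion reduces the lemma to identifying the coefficient of the particular sigma product $\prod_{i=1}^q \sigma_{\mathcal{C}_i}[k]$ coming out of the right-hand side.

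The critical structural observation is that partitions of the amassed graph $g(\pi)$ are in bijection with coarsenings of $\pi$ on $g$. Therefore $\prod_{i=1}^q \sigma_{\mathcal{C}_i}[k]$ can appear in the sigma expansion of $\Pr[G_k = g(\pi)]$ only when $\pi$ refines $\pi_q$, and these refinements are themselves in bijection with tuples $(\rho_1, \ldots, \rho_q)$ in which $\rho_i$ is a partition of the cell $\mathcal{C}_i$ of $\pi_q$. Since each factor $\sigma_{\mathcal{C}_i}$ depends only on the walkers lying inside its own cell, I expect the sum over refinements of $\pi_q$ to factorise into a product of $q$ independent sums, one per cell.

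If that decoupling goes through, each per-cell sum is exactly the coefficient of the full-amass sigma term $\sigma_{\mathcal{C}_i}[k]$ in the sigma expansion of the sub-problem of amassing the $c_i$ cliques of $\mathcal{C}_i$ into one; by Lemma \ref{lemma: trivial partition weight} applied to a contact graph with $c_i$ cliques, this coefficient is $(-1)^{c_i - 1}(c_i - 1)!$. Multiplying across the $q$ cells then yields the claimed product formula $\prod_{i=1}^{q}(-1)^{c_i-1}(c_i-1)!$. The hard part will be the combinatorial bookkeeping that makes the decoupling rigorous: verifying the bijection between refinements of $\pi_q$ and cell-wise tuples $(\rho_i)$, and checking that when the inductive hypotheses are plugged into the recursion the cross-cell contributions genuinely separate rather than tangle. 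Once this factorisation step is secured, the per-cell identity is a direct application of Lemma \ref{lemma: trivial partition weight} and the induction closes.
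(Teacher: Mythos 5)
Your plan is correct in its conclusion and shares the paper's two load-bearing ingredients -- reduce everything to Lemma \ref{lemma: trivial partition weight} applied cell by cell, then multiply -- but the mechanism is genuinely different. The paper does not run an induction on the recursion at all: it observes that the walker groups $\mathcal{C}_1,\ldots,\mathcal{C}_q$ are independent, writes $\Pr[G_k{=}g]$ as a product of the $q$ sub-contact-graph probabilities $\Pr[G_k(i){=}g(i)]$, sigma-expands each factor separately, and reads off the coefficient of $\prod_{i}\sigma_{\mathcal{C}_i}[k]$ as the product of the per-cell coefficients of $\sigma_{\mathcal{C}_i}[k]$, each equal to $(-1)^{c_i-1}(c_i-1)!$ by Lemma \ref{lemma: trivial partition weight}. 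That buys a very short argument at the price of leaning on a probabilistic factorisation that is really a statement about the formal expansion rather than about the probabilities themselves (the product of the sub-graph probabilities does not exclude cross-cell co-location). Your route -- induction on $m$ through the recursion (\ref{eq: kcliques recursive}), the bijection between partitions of $g(\pi)$ and coarsenings of $\pi$, and the identification of the interval below $\pi_q$ with the product of the per-cell partition lattices -- is more laborious but makes the coefficient extraction fully combinatorial; it is essentially a hands-on computation of the M\"obius function of the partition lattice.

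One step in your sketch needs more than you give it. After rewriting the recursion as $\prod_{l=1}^m\sigma_{\mathcal{A}_l}[k]=\sum_{\pi\in\mathcal{P}_g}\Pr[G_k{=}g(\pi)]$ and applying the inductive hypothesis, the coefficient you must evaluate is a sum over all refinements of $\pi_q$ \emph{except} the singleton partition $\pi_m$, and that single exclusion prevents the sum from splitting into a clean product of independent per-cell sums: the cells remain coupled through the excluded term. The repair is to add the excluded term back, note that the unrestricted sum does factor as $\prod_{i=1}^q\bigl(\sum_{l=1}^{c_i}\mathcal{S}_{c_i}^{(l)}(-1)^{l-1}(l-1)!\bigr)$, and observe that each bracket vanishes unless $c_i=1$ by the same Stirling orthogonality that solves the recursion in Lemma \ref{lemma: stirling recursion lemma}; since $q<m$ forces some $c_i>1$, the whole product is zero and what survives is exactly the all-singleton term $\prod_{i=1}^q(-1)^{c_i-1}(c_i-1)!$. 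You correctly flag the decoupling as the hard part, but the resolution is this cancellation, not the per-cell identity as stated; with that supplied, your induction closes.
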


Lemma \ref{lemma: trivial partition weight} offers a formula for the weight $\beta_m(\pi_1)$ of the sigma product $\sigma_\mathcal{M}[k]$ (associated with the trivial partition $\pi_1 = \{\mathcal{M}\}$, i.e. the complete graph $K_M$). We build Lemma \ref{lemma: general} from Lemma \ref{lemma: trivial partition weight} as a generalisation from the trivial partition to any $q$-partition $\pi_q$ on the walker set $\mathcal{M}$. More precisely, we find the weight $\beta_m(\pi_q)$ of the sigma product associated with any $q$-partition on $g$, where $q < m$. The proofs of Lemmas \ref{lemma: trivial partition weight} and \ref{lemma: general} are provided in Appendix \ref{apx: lemma 1 2 proof}. 

The proof of Theorem \ref{thm: main theorem} is immediate by applying Lemma \ref{lemma: general} to all partitions on $g$.

\section{Steady-state contact graphs}\label{main section: steady-state graphs}
We assume that the same $N \times N$ Markov transition matrix $P$, which is common for all walkers, possesses a steady-state distribution $\Tilde{s}$, obeying $\Tilde{s} = \Tilde{s}P$. Then, the steady-state probability vector of each walker $w \in \mathcal{M}$ reduces to
\begin{align}\label{eq: steady-state common policy}
    \lim_{k \rightarrow \infty} s_w[k] = \Tilde{s}.
\end{align}

For a clique $\mathcal{A}$ of size $|\mathcal{A}|{=}q$ and recalling the $k$-step Markov transition probability $s_j[k]{=}s_j[0]P^k$, taking the limit in (\ref{eq:sigmaA}) as $k \rightarrow \infty$ and invoking the existence of a steady-state in (\ref{eq: steady-state common policy}) yields
\begin{align}\label{eq:sigma_a steadystate}
     \lim_{k \rightarrow \infty} \sigma_\mathcal{A}[k] &= \lim_{k \rightarrow \infty} \bigodot_{w_j \in \mathcal{A}} \left(s_{j}[0]P_{j}^k \right) u^T\nonumber\\
     &= \left(\bigodot_{w_j \in \mathcal{A}} \Tilde{s}\right) u^T= \sum_{i=1}^N (\Tilde{s}_i)^{q}.
\end{align}
The combinatorial nature of Theorem \ref{thm: main theorem} does not permit an analytical simplification of equation (\ref{eq: kcliques closed_form}) in the steady-state.
However, equation (\ref{eq:sigma_a steadystate}) illustrates that cliques of the same size have the same probability, because all walkers have the same steady-state distribution $\Tilde{s}$.  Therefore, the probability of a steady-state contact graph does not depend on the labelling of walkers inside cliques, but rather only on clique sizes and the steady-state vector $\Tilde{s}$.

\begin{example}
    Let $M{=}4$ walkers be in the steady-state $\Tilde{s}=[0.1\ 0.1\ 0.1\ 0.7]^T$. Using Theorem \ref{thm: main theorem}, we calculate the probability distribution of the steady-state contact graph $G_\infty$ formed by the walkers. In Figure \ref{fig:probadensity n=m=4}, we plot the most probable 4 realisations and illustrate that the second, third and fourth most probable realisations have equal probabilities and the same topologies.
\end{example}
    \begin{figure}[!t]
        \centering
        \includegraphics[width=\linewidth]{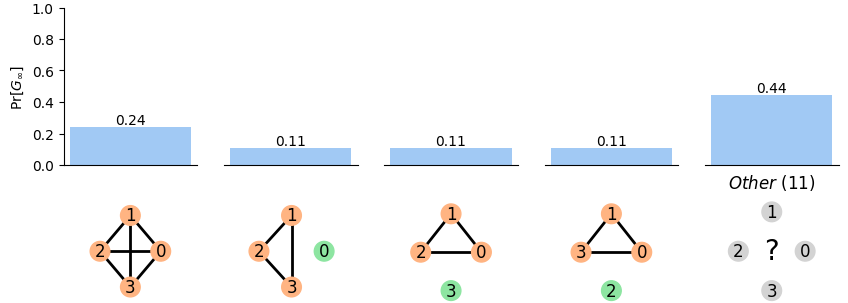}
        \vspace{-0.2cm}\caption{Most probable 4 realisations of the contact graph $G_\infty$ formed by 4 walkers.}
        \label{fig:probadensity n=m=4}
    \end{figure}
    \begin{figure}[!t]
        \centering
        \includegraphics[width=\linewidth]{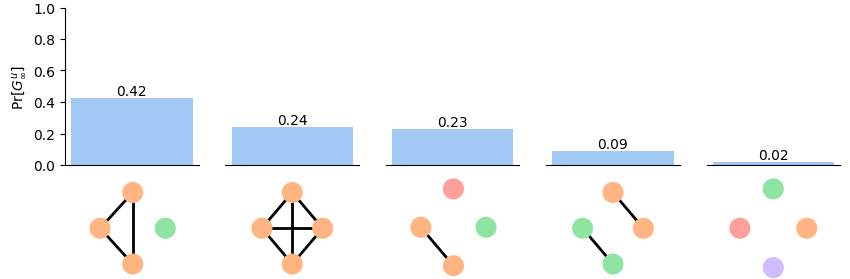}
        \vspace{-0.2cm}\caption{Probability density of the unlabelled contact graph $G^u_\infty$ formed by 4 walkers.}
        \label{fig:probadensity n=m=4 unlabelled}
    \end{figure}
Consequently, all $m$-clique steady-state contact graphs, which have the same $m$ clique sizes, have equal probability and we are thus motivated to study the probability of \textit{unlabelled} steady-state contact graphs.

\subsection{Unlabelled contact graphs}
Consider a steady-state $m$-clique contact graph realisation $g_\infty{=}\{\mathcal{A}_1,{\ldots}, \mathcal{A}_m\}$ with clique sizes $|\mathcal{A}_1|{=}q_1$, \ldots, $|\mathcal{A}_m|{=} q_m$. Additionally, denote by $\mathcal{Q} =\{q_1, {\ldots}, q_m\}$ a set of $m$ positive integers which sum to $M$, i.e.  $\sum_{i=1}^m q_i = M$.

We seek the number $\gamma(\mathcal{Q})$ of different steady-state $m$-clique contact graph realisations with $M$ walkers where the set formed by clique sizes of each realisation is equal to the set $\mathcal{Q}$.
The number $\gamma$ is equal to the solution to the combinatorial problem of counting \textit{how many ways there are to arrange $M$ identical objects into $m$ bins with sizes $\{q_1, {...}, q_m\}$}. If we denote by $c_j$ the number of cliques of size $j$ i.e. $c_j=|\{i\in \{1,...,m\}{:} q_i{=}j\}|$, for all $1 \leq j \leq M$, the solution to the problem is, by \cite[equation (13.3)]{book_combinatorics},
\begin{align}
    \gamma(\mathcal{Q}) = \frac{M!}{\prod_{i=1}^m q_i! \prod_{j=1}^M c_j!}\label{gamma_eq}.
\end{align}

Any realisation $g$ with equal clique size set has the same structure. Hence, removing the node labels of any contact graphs with  clique size set $\mathcal{Q}$ results in the \textit{unlabelled} graph $g_u$ which is equivalent to the set $\mathcal{Q}$. We consider unlabelled contact graphs when all walkers are found in the same steady-state because, as discussed in the beginning of Section \ref{main section: steady-state graphs}, all realisations with equal clique size set have equal probability. 

Unlabelled contact graphs allow us to scale RWIG to higher walker counts $M$ by reducing the contact graph state space. For instance, Table \ref{tab: gk count for all mn} shows that the total number of contact graphs for \(M{=}9\) walkers and \(N{=}10\) states is 21,147. However, the total number of \textit{unlabelled} contact graphs for \(M{=}9\) and \(N{=}10\) is only 30, which is the number of partitions of the positive integer \(M{=}9\) into a multiset of positive integers, such that the elements sum to \(M\). Hence, as we increase the number of walkers \(M\) considerably above the number of states $N$ in the Markov graph, we avoid the combinatorial Bell numbers explosion of the contact graph state space by omitting fine grained information on the walkers' clique assignment (i.e. which walker belongs to which clique) and allow for practical analysis of the clique sizes distribution.

Consider any labelled contact graph realisation $g$ which results in an unlabelled graph $g^u$. Then the probability of an unlabelled graph $g^u$ is defined by Lemma \ref{lemma: unlabelled contact graph proba}.

\begin{lemma}\label{lemma: unlabelled contact graph proba}
    The probability of a steady-state unlabelled $m$-clique contact graph $g^u$ with clique sizes $\mathcal{Q}{=}\{q_1, ..., q_m\}$ and $M$ walkers is

     \begin{equation}\label{eq: unlabelled graph proba from thm 1}
    \Pr[G^u_\infty = g^u] = \gamma(\mathcal{Q})\Pr[G_\infty {=} g],
\end{equation}
where $\gamma(\mathcal{Q})$ is defined by equation (\ref{gamma_eq}), $g {=} \{\mathcal{A}_1, {...}, \mathcal{A}_m\}$ is a contact graph realisation obtained by any labelling of the nodes in the unlabelled realisation $g^u$ with distinct labels from the walker set $\mathcal{M}$ and
\begin{equation}
\Pr[G_\infty{=}g]{=}\sum_{\pi \in \mathcal{P}_{g}} \left(\prod_{\mathcal{C} \in \pi} ({-1})^{{|\mathcal{C}|-1}} (|\mathcal{C}|{-1)}! \right)\prod_{\mathcal{A} \in g(\pi)} \left(\sum_{i=1}^N \Tilde{s}_i^{|\mathcal{A}|}\right)
\end{equation}
with all walkers traverse the same Markov graph with $N$ states and steady-state vector $\Tilde{s}$. 

\end{lemma}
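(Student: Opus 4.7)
The plan is to prove Lemma \ref{lemma: unlabelled contact graph proba} in two steps: (i) show that in the steady state every labelled realisation $g$ whose clique-size multiset equals $\mathcal{Q}$ has the same probability, and then (ii) count how many labelled realisations $g$ collapse onto the given unlabelled realisation $g^{u}$.

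For step (i), I would substitute the steady-state limit (\ref{eq:sigma_a steadystate}) into the closed form (\ref{eq: kcliques closed_form}) of Theorem \ref{thm: main theorem}. Each factor becomes $\lim_{k\to\infty}\sigma_{\mathcal{A}}[k] = \sum_{i=1}^N \tilde{s}_i^{|\mathcal{A}|}$, so it depends on the clique $\mathcal{A}$ only through $|\mathcal{A}|$; moreover, for any amassed clique $\mathcal{A}\in g(\pi)$, this cardinality equals the sum of the sizes of the cliques of $g$ lying in the corresponding cell of $\pi$. The signed coefficient $\prod_{\mathcal{C}\in\pi}(-1)^{|\mathcal{C}|-1}(|\mathcal{C}|-1)!$ likewise depends only on the cell cardinalities $|\mathcal{C}|$. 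Hence the formula depends on $g$ only through the abstract partition structure of its cliques, i.e. through $\mathcal{Q}$: any bijection between two labelled realisations $g$ and $g'$ that preserves the clique-size multiset induces a bijection $\mathcal{P}_{g}\to\mathcal{P}_{g'}$ matching summands term by term, so $\Pr[G_\infty{=}g]$ is the same for all labelled $g$ with clique-size multiset $\mathcal{Q}$.

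For step (ii), the event $\{G^{u}_\infty{=}g^{u}\}$ decomposes as the disjoint union of the events $\{G_\infty{=}g\}$ over all labelled $g$ obtained by assigning distinct walker labels from $\mathcal{M}$ to the nodes of $g^{u}$. By step (i) these probabilities are all equal, so the union probability equals $\gamma(\mathcal{Q})\Pr[G_\infty{=}g]$, where $\gamma(\mathcal{Q})$ is the number of such labellings. I would verify (\ref{gamma_eq}) by a standard multinomial argument: an ordered tuple of cliques of prescribed sizes $q_1,\ldots,q_m$ can be filled by the $M$ walkers in $M!/\prod_{i=1}^m q_i!$ ways, and dividing by $\prod_{j=1}^M c_j!$ removes the overcounting from permuting cliques of equal size $j$, since such permutations yield the same unordered partition of $\mathcal{M}$.

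The subtlest point, I expect, is step (i): pinning down that the sum in (\ref{eq: kcliques closed_form}) is literally identical for any two labelled $g$ sharing $\mathcal{Q}$. Once one observes that both the signed coefficient and each sigma-factor are functions of the cell cardinalities of $\pi$ and of the sizes of the cliques amassed within each cell, the bijection $\pi\mapsto\pi'$ matches summands in pairs and the invariance follows. Combining steps (i) and (ii) yields (\ref{eq: unlabelled graph proba from thm 1}), and the explicit expression for $\Pr[G_\infty{=}g]$ in the lemma statement is the steady-state specialisation of (\ref{eq: kcliques closed_form}) obtained via (\ref{eq:sigma_a steadystate}).
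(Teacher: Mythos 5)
Your proposal is correct and follows essentially the same route as the paper, which justifies the lemma by first observing (via the steady-state limit $\lim_{k\to\infty}\sigma_{\mathcal{A}}[k]=\sum_{i=1}^N \Tilde{s}_i^{|\mathcal{A}|}$ substituted into Theorem \ref{thm: main theorem}) that all labelled realisations sharing the clique-size set $\mathcal{Q}$ have equal probability, and then multiplying by the count $\gamma(\mathcal{Q})$ of such labellings. Your step (i) is in fact spelled out more carefully than the paper's brief remark that the probability "does not depend on the labelling of walkers inside cliques, but rather only on clique sizes," since you make explicit the bijection between partition sets that matches summands term by term.
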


\begin{example}
    Let $M {=} 4$ walkers be in the steady-state $\Tilde{s}{=}[0.1\ 0.1\ 0.1\ 0.7]^T$. Using Lemma \ref{lemma: unlabelled contact graph proba}, we calculate the probability distribution of the steady-state unlabelled contact graph $G^u_\infty$ formed by the walkers and plot it in Figure \ref{fig:probadensity n=m=4 unlabelled}.
\end{example}

\subsection{A combinatorial computation of the steady-state graph }
Another way to compute the probability of a steady-state unlabelled $m$-clique contact graph, where each walker has the same\footnote{If not all walker's probabilities are the same, then we must again compute all possible partitions as in Theorem \ref{thm: main theorem}.} steady-state vector \(\Tilde{s}\), can be obtained from Theorem \ref{thm: m cliques}. For a labelled contact graph $g {=} \{\mathcal{A}_1, {...}, \mathcal{A}_m\}$, the probability of realisation $g$ becomes
\begin{align}\label{eq: m cliques combinatorial steady-state}
    \Pr[G_\infty {=} g] &= \lim_{k \rightarrow \infty} \Pr[G_k = g ] =\nonumber\\
    &= \sum_{i_1=1}^N \sum_{\substack{i_2=1\\i_2 \not\in \{i_1\}}}^N ... \sum_{\substack{i_m=1\\i_m \not\in \{i_1, ..., i_{m-1}\}}}^N \prod_{j=1}^m \Tilde{s}_{i_j}^{q_j}.
\end{align}
where the clique sizes $q_i = |\mathcal{A}_i|$, for all $1 \leq j \leq m$, form the clique size set $\mathcal{Q} = \{q_1, {...}, q_m\}$. The number of labelled graphs with clique size set $\mathcal{Q}$ is given by equation (\ref{gamma_eq}), and thus the probability of an unlabelled steady-state contact graph $g^u$ with clique size set $\mathcal{Q}$ is 

\begin{align}\label{eq: m cliques combinatorial steady-state unlabelled}
    \Pr[G_\infty^u {=} g^u] = \frac{M! \sum_{i_1{=}1}^N \sum_{\substack{i_2{=}1\\i_2 \not\in \{i_1\}}}^N ... \sum_{\substack{i_m=1\\i_m \not\in \{i_l\}_{l=1}^{m{-}1}}}^N \prod_{j=1}^m \Tilde{s}_{i_j}^{q_j}}{\prod_{i=1}^m q_i! \prod_{j{=}1}^M c_j!}.
\end{align}
where $c_j$ is the number of cliques of size $j$, for all $1 {\leq} j {\leq} M$.

\section{Empirical analysis} \label{section: data}
The assumption of RWIG, that all walkers found in the same Markov state at discrete-time $k$ are connected in the contact graph, implies that the contact graph $G_k$ is formed by the union of disconnected cliques. In this section, we analyse various empirical temporal networks to validate our assumption and we demonstrate that RWIG is able to reproduce contact graphs with similar topological properties.
\subsection{Datasets}\label{section: datasets} 
We inspect a series of empirical datasets collected through the SocioPatterns sensing platform (http://www.sociopatterns.org). Génois and Barrat \cite{co-location} study how co-location graphs can be used as a proxy for face-to-face contacts. Similar to our fundamental assumption in RWIG, individuals are considered connected in a co-location graph if they are found to be in the same spatial location. Consequently, the co-location datasets are snapshots at discrete time steps of graphs. Our analysis of the datasets released in \cite{co-location} found that all co-location samples consist of unions of disconnected cliques, in complete accord with the topology of the contact graphs generated by RWIG.

We study the \textit{clique size} distribution and the \textit{clique count} distribution in co-location graphs. The clique size distribution is defined as the probability of observing a clique of a certain size and offers insight into possible patterns of \textit{typical} clique sizes. The clique count quantifies the connectivity of the contact graph. Figure \ref{fig: co-location maps} depicts the clique size and count distributions for three co-location datasets from \cite{co-location}: InVS15 (219 nodes), LyonSchool (242 nodes) and Thiers13 (328 nodes). For the clique size distribution, we only consider cliques formed of at least two individuals.

\begin{figure}[!t]
\centering
\includegraphics[width=2.54in]{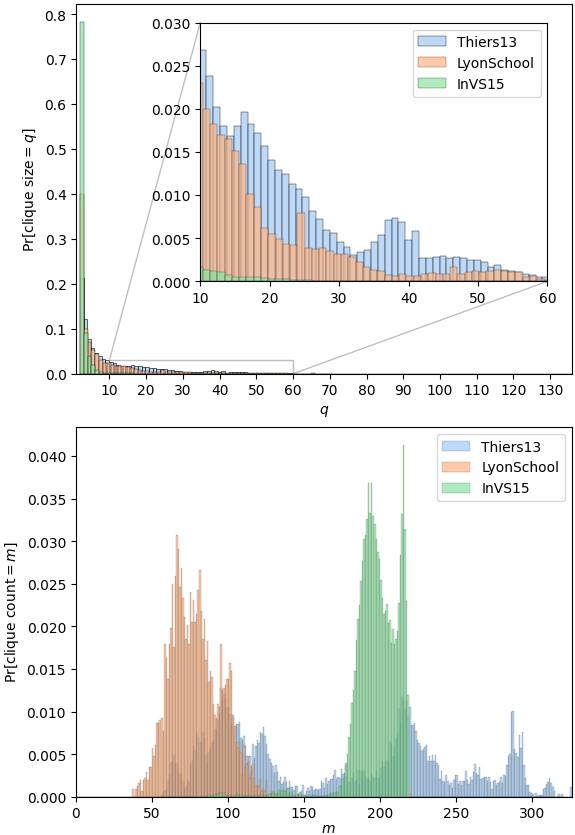}%
\centering
\vspace{-0.25cm}\caption{Clique size distribution (top) and clique count distribution (bottom) for three co-location datasets.}
    \label{fig: co-location maps}
\end{figure}

As shown in Figure \ref{fig: co-location maps}, most cliques for all datasets are small in size and consist mainly of two nodes. However, the clique count distribution indicates that some datasets exhibit stronger connectivity. For instance, the co-location graphs in LyonSchool have on average fewer cliques than the graphs in InVS15 while having a larger number of nodes. Hence, there is a comparably higher propensity for larger clique sizes in the graphs from LyonSchool, which is supported by Figure \ref{fig: co-location maps}. Overall, there is significant variability in the structure of co-location graphs.

\subsection{Simulations of steady-state contact graphs}\label{section: steady-state graphs}
We now show how RWIG is able to generate both sparse and dense contact graphs with minimal parameter tuning. As the unlabelled steady-state contact graph distribution $\Pr[G^u]$ depends only on the steady-state distribution $\Tilde{s}$, we compute the clique size and clique count distributions for a range of different steady-state distributions. We consider $M{=}10$ walkers and a Markov graph with $N{=}15$ states which admits a steady-state vector $\Tilde{s}$. We consider three different steady-state vectors (see Table \ref{tab:steady-states}). The first two steady-state vectors $\Tilde{s}{=}[s_1\ s_2\ ...\ s_N]^T$ have equal probability $s_1{=}s_2{=}...{=}s_{N{-}1}$ for the the first $N{-}1$ states while the probability $s_N$ of state $N$ takes values $s_N > s_1$ and $s_N \gg s_1$. We also consider the steady-state vector with the last three elements equal to each other $\Tilde{s} = [\frac{1}{1200}\ ...\ \frac{1}{1200}\ 0.32\ 0.32\ 0.32]$, which we call the \textit{Multimodal} steady-state vector. 

\begin{table}[h]
    \caption{Steady-state vectors.}\vspace{-0.3cm}
    \label{tab:steady-states}
    \centering
    \begin{tabular}{|l||*{1}{c|}}\hline
    \makebox[5em]{}&\makebox[2em]{$\Tilde{s}$}\\\hline\hline
    \makebox[5em]{$s_N = 0.33$}&[0.047 ... 0.047 0.33]\\\hline
    \makebox[5em]{$s_N = 0.96$}&[0.003 ... 0.003  0.96]\\\hline

    \makebox[5em]{$Multimodal$}&$[\frac{1}{1200}\ ...\ \frac{1}{1200}\ 0.32\ 0.32\ 0.32]$\\\hline
    \end{tabular}
\end{table}

 Figure \ref{fig: structure metrics steady-state} illustrates the clique size and clique count distributions for $M{=}10$ walkers on a $N{=}15$ state Markov graph. We overlay a smooth Kernel Density Estimate (KDE) line plot on top of the histograms for better visualisation.

\begin{figure}[!t]
\centering
\includegraphics[width=1.92in]{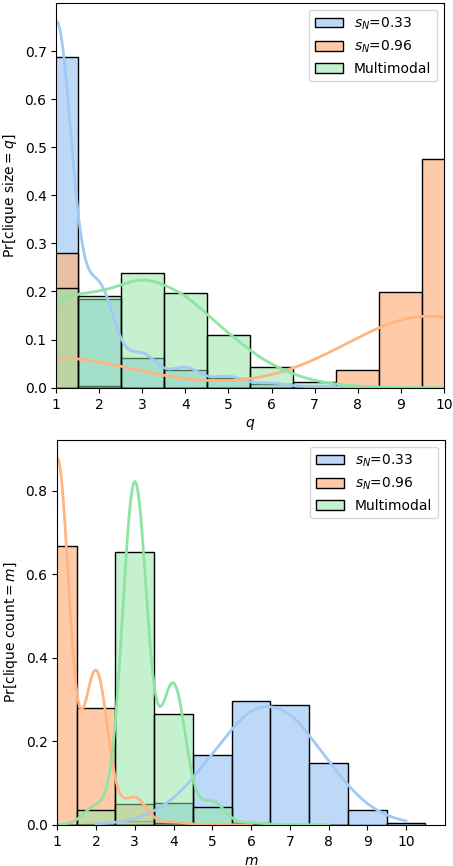}%
\centering
\vspace{-0.2cm}\caption{Clique size distribution (top) and clique count distribution (bottom) for unlabelled steady-state contact graphs.}
    \label{fig: structure metrics steady-state}
\end{figure}

Our experiments demonstrate that RWIG is capable of producing diverse contact graphs including graphs with many small cliques (e.g. $s_N {=} 0.33$) or few large cliques (e.g. $s_N {=} 0.96$), as shown in our empirical data analysis in Section \ref{section: datasets}. Furthermore, we have illustrated how clique size variety is already imposed by only tuning the steady-state vector $\Tilde{s}$. We expect that with generalisations such as different Markov graphs for each walker or more complex Markov graphs (e.g. reducible, periodic, several absorbing states, etc.), RWIG is capable of generating more tailored contact graph distributions and sequences.

\section{Conclusion}
We presented RWIG, a temporal contact graph model generated by independent random walkers on a Markov graph. A random walk on a Markov graph is a realization of a Markov process, which is specified in discrete time by a transition probability matrix $P_w$ and an initial condition $s_w[0]$ for each walker $w$. Hence, by choosing the matrices $P_w$ as well the vectors $s_w[0]$, any collection of discrete-time Markov processes can generate a corresponding temporal contact graph sequence consisting of disjoint cliques, which makes RWIG very general.  We derived the probability distribution (Theorem \ref{thm: main theorem}) of the RWIG contact graphs under the assumption of known initial walker states and transition probabilities in the Markov graph. We further analysed in Section \ref{section: data} through simulations the influence of a common walker steady-state vector on the distribution of generated contact graphs and illustrated that our model can produce both sparse and dense contact graphs. The analytical tractability of the model, along with the capability to create a wide variety of contact graphs, renders RWIG a promising basis for temporal graphs generative modelling.

\section{Further work}

We will explore extensions to the RWIG model:

First, RWIG generates links in the contact graph between walkers in the same state and necessarily generates graphs, which are unions of disconnected cliques.
We will elevate this limitation by extending RWIG to generate more complex subgraphs than cliques. 

Second, we plan to address the inverse problem, that consists of finding the class of transition probability matrices $P$ that generates a given $K$-length sequence of contact graphs $G_1, ..., G_K$. 
While statistical methods such as maximum likelihood estimation lie at the heart of the problem, the complexity of the parameter search space and scarcity of similarity measures for temporal graphs make this task non-trivial.

Third, given that a link in $G_k$ occurs, what is the probability that that link still exists at time $l{>}k$ in $G_l$? Alternatively, can RWIG's transition probability matrix be tuned to generate a "link burst" (i.e. the existence of a link over multiple time slots).
Many other questions or assumptions made in the temporal graph community may be addressed from the "process point of view" of RWIG.

Finally, motivated by the importance of higher-order correlations in time-respecting paths \cite{Scholtes2017,Lambiotte2019}, can RWIG be used to analytically calculate the probability of time-respecting paths of length $k$?
The answer would not only unravel which mechanisms (in terms of the underlying Markov graph and the transition probability matrix $P$) can lead to temporal graphs, whose \emph{causal topology} --i.e. which nodes can indirectly influence each other via time-respecting paths-- differs from that of the corresponding static graph, but
it would also shed light on the question why many human contact patterns exhibit second-order correlations, which has been shown to strongly influence the dynamics of diffusion and epidemic spreading \cite{Pfitzner2013,Scholtes2014}.

\section*{Acknowledgement}
This research has been funded by the European Research Council (ERC) under the European Union’s Horizon 2020 research and innovation program (grant agreement No 101019718). Ingo Scholtes acknowledges funding by the Swiss National Science Foundation (Grant No. 176938).

\ifCLASSOPTIONcaptionsoff
   \newpage
 \fi

\vskip -1\baselineskip plus -1fil
\begin{IEEEbiography}[{\includegraphics[width=1in,height=1.25in,clip,keepaspectratio]{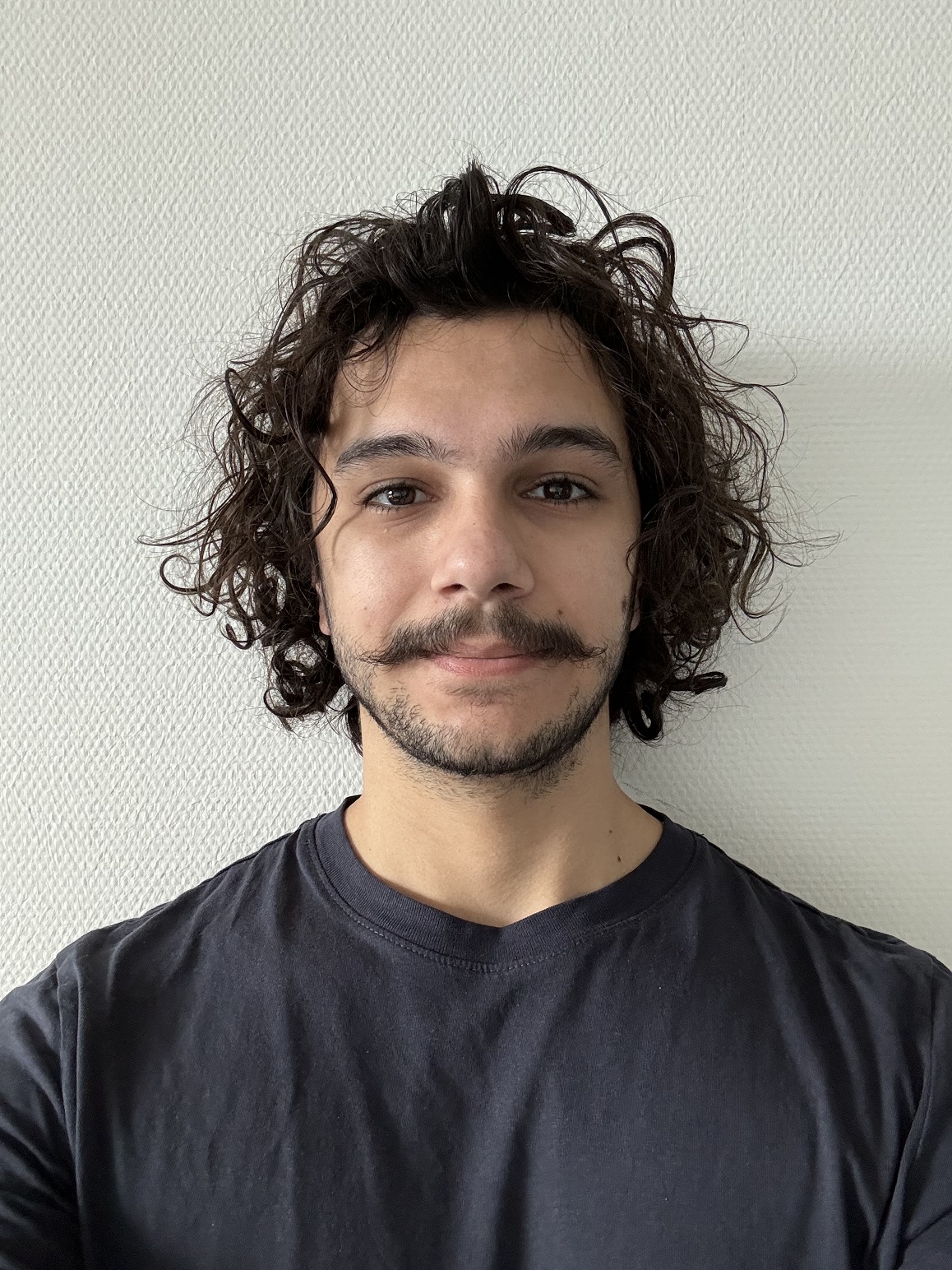}}]{Anton-David Almasan}
is a PhD student at the Delft University of Technology since 2024. He received a Master of Engineering degree from the University of Cambridge, with specialisation in Information and Computer Engineering. His main research interests span machine learning, network science and stochastic processes. Before joining Delft, David worked as an AI Consultant at Deepsea Technologies and as a Data Scientist at Thales UK.
\end{IEEEbiography}

\vskip -1\baselineskip plus -1fil
\begin{IEEEbiography}[{\includegraphics[width=1in,height=1.25in,clip,keepaspectratio]{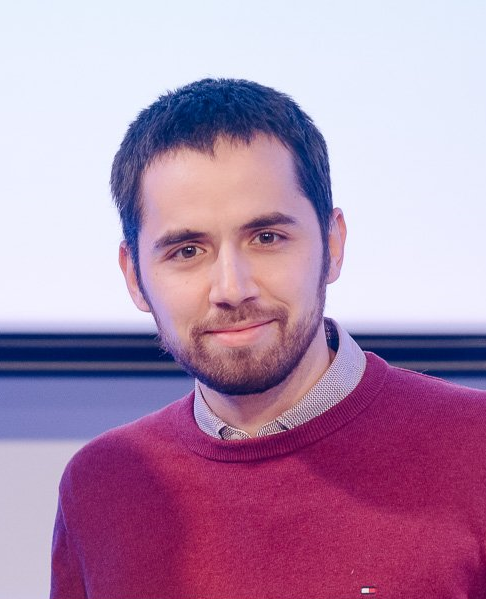}}]{Sergey Shvydun} received the Masters degree in Business Informatics (with distinction, 2014) and PhD in Applied Mathematics (cum laude, 2020) from the HSE University, Moscow, Russia. He is postdoctoral researcher at the Delft University of Technology since 2023. His main research interests include network science, machine learning, operations research, and social choice theory. Before joining Delft, he worked as an associate professor at the HSE University and as a senior research fellow at the Institute of
Control Sciences of the Russian Academy of Science. Sergey is the author of more than 30 papers in peer-reviewed journals and edited volumes and 1 book on centrality in networks.
\end{IEEEbiography}

\vskip -1\baselineskip plus -1fil
\begin{IEEEbiography}[{\includegraphics[width=1in,height=1.25in,clip,keepaspectratio]{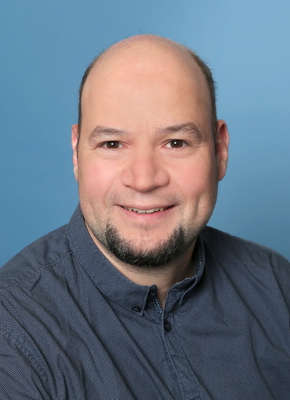}}]{Ingo Scholtes} is professor of Machine Learning for Complex Networks at the Center for Artificial Intelligence and Data Science (CAIDAS) at University of Würzburg and SNF Professor for Data Analytics at the Department of Informatics (IfI) at University of Zurich. 
His main research interests include network science, the modelling and analysis of temporal networks, and applications of machine learning to complex networks.
Apart from methodological contributions at the interface between network science and machine learning, his current research interests also include applications of (temporal) graph learning in collaborative software engineering, single cell biology, and astrophysics.
He is vice-spokesperson of the Center for Artificial Intelligence and Data Science (CAIDAS) and founding co-chair of the topical section Computational Social Science of the German Informatics Society (GI e.V.). 

Professor Scholtes received his Diploma (with distinction, 2005) and Doctorate degree in Computer Science (summa cum laude, 2011) from the University of Trier, Germany. 
From 2011 until 2018 he was a postdoctoral researcher and lecturer at the Chair of Systems Design at ETH Zürich.
In 2014, he was awarded a Junior-Fellowship from the German Informatics Society (GI e.V.).
In 2018 he was awarded an SNSF-Professorship from the Swiss National Science Foundation.
Before joining University of Würzburg as a chaired professor in 2021, he was professor for Data Analytics at the University of Wuppertal, Germany (2019 -- 2021) and at the University of Zurich (2018 - 2024).

Professor Scholtes is author of more than 80 papers in peer-reviewed conference proceedings and interdisciplinary journals.
He has served on the editorial board of \emph{Advances in Complex Systems} and is currently an associate editor of \emph{EPJ Data Science}.
\end{IEEEbiography}

\vskip -1\baselineskip plus -1fil
\begin{IEEEbiography}[{\includegraphics[width=1in,height=1.25in,clip,keepaspectratio]{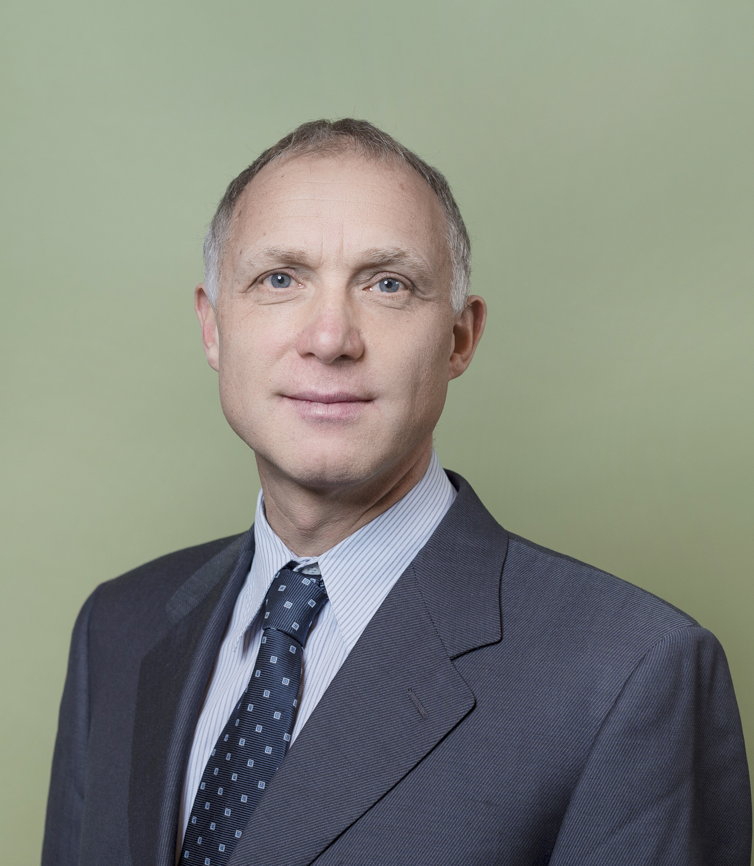}}]{Piet Van Mieghem} is professor at the Delft University of Technology and chairman of the section Network Architectures and Services (NAS) since 1998. He is the author of four books: Performance Analysis of Communications Networks and Systems, Data Communications Networking, Graph Spectra for Complex Networks and Performance Analysis of Complex Networks and Systems. He is a board member of the Netherlands Platform of Complex Systems, a steering committee member of the
Dutch Network Science Society, an external faculty member at the Institute for Advanced Study (IAS) of the University of Amsterdam and an IEEE Fellow. He was awarded an Advanced ERC grant 2020 for ViSiON, Virus Spread in Networks. Professor Van Mieghem received a Master degree and a Ph.D. degree in Electrical Engineering from the K.U.Leuven (Belgium) in 1987 and 1991, respectively. Before joining Delft, he worked at the Interuniversity Micro Electronic Center (IMEC) from 1987 to 1991. During 1993 to 1998, he was a member of the Alcatel Corporate Research Center in Antwerp. He was a visiting scientist at MIT (1992-1993) and a visiting professor at UCLA (2005), at Cornell University (2009), at Stanford University (2015) and at Princeton University (2022). Currently, he serves on the editorial board of the OUP Journal of Complex Networks. He was member of the editorial board of Computer Networks (2005-2006), the IEEE/ACM Transactions on Networking (2008- 2012), the Journal of Discrete Mathematics (2012- 2014) and Computer Communications (2012-2015).
\end{IEEEbiography}

\clearpage
\appendices
\section{List of symbols}\label{apx: symbols}
\begin{tabular}{p{0.34\linewidth}p{0.55\linewidth}}
    $N \in \mathbb{N}$&Number of states\\
    $K \in \mathbb{N}$&Discrete time\\
    $M \in \mathbb{N}$&Number of random walkers\\
    $P_w \in \mathbb{R}^{N \times N}$&Transition probability matrix of walker $w$\\
    $e_i \in \{0, 1\}^{1 \times N}$&The all zero vector with one at position $i$\\
    $u \in \{1\}^{1 \times N}$&The all ones vector\\ 
    $w_i$&Label of walker $i$\\
    $x_{j}[k] \in \{1,{\ldots}, N\}$ &State of walker $w_j$ at discrete time $k$\\
    $X_{j}[k]$&Random variable of the state of walker $w_j$ at discrete time $k$\\
    $s_{i}[k] \in \mathbb{R}^{1 \times N}$&Markov states probability distribution of walker $w_i$ at discrete time $k$\\
    $\mathcal{M} \in \{w_1,{\ldots}, w_M\}$&The complete set of $M$ walkers\\
    $|\mathcal{M}| = M$&The cardinality of set $\mathcal{M}$\\
    $\mathcal{A}_{i} \in \mathcal{M}$&Clique $i$ of walkers (i.e. a subset of the walker set $\mathcal{M}$)\\
    $\mathbf{s}_{\mathcal{M}}[k] \in \mathbb{R}^{M \times N}$&Markov states probability distribution of all walkers $\mathcal{M}$ at discrete time $k$\\   
    $\mathbf{s}_{\mathcal{A}_i}[k] \in \mathbb{R}^{|\mathcal{A}_{i}| \times N}$&Markov states probability distribution of walkers belonging to clique $\mathcal{A}_{i}$\\  
   $ \textbf{P}_\mathcal{A} \in \mathbb{R}^{N \times N \times |\mathcal{A}|}$&Tensor with transition probabilities of all walkers from subset $\mathcal{A}$ found in the same Markov state at time $k$\\
    $\pi_m$&$m$-partition on the walker set $\mathcal{M}$\\ 
    $g$&Contact graph (i.e. partition on the set $\mathcal{M}$) realisation\\ 
    $G_k$&Random variable of the contact graph at discrete time $k$\\ 
    $K_M$&Complete graph with $M$ nodes\\
    $\mathcal{S}_M^{(k)} \in \mathbb{N}$ &Stirling number of the second kind\\ 
    $\mathcal{B}_M \in \mathbb{N}$ &$M^{th}$ Bell number\\ 
    $\sigma_{\mathcal{A}_{i}}[k]$ &The probability that all walkers in the subset $\mathcal{A}_{i}$ are in the the same Markov state at time $k$ (i.e. at the same node in the underlying graph)\\ 
    $\mathcal{P}_g$ &The set of all possible partitions on the contact graph $g$\\ 
    $\mathcal{C}$ &Cell of a partition on a contact graph (i.e. a subset of cliques or a subset of subsets of walkers)\\
    $g(\pi)$ &The contact graph associated with partition $\pi$ on the contact graph $g$ (i.e. the amassed-clique contact graph formed by coalescing cliques in $g$)\\ 
    $\gamma(\mathcal{Q})$ &The number of labelled contact graphs with clique size set $\mathcal{Q}$\\
\end{tabular}
\begin{tabular}{p{0.35\linewidth}p{0.55\linewidth}}
    $\beta_m(\pi)$ &The weight of the sigma product associated with partition $\pi$ on a contact graph $g$ in the sigma expansion formula for the probability of contact graph $g$\\ 
    
\end{tabular}
\section{Prerequisite on combinatorics}\label{apx:combinatorics}
\subsection{Partitions and Stirling numbers}\label{apx: bell,stirling definitions}
\textit{Partition} \cite{PVM_graphspectra_second_edition_app,book_combinatorics_app}. A splitting of the elements of the set $\mathcal{M} = \{1, 2, ..., M\}$ into $m$ non-empty disjoint subsets is an $m$-partition $\pi_m$. The subsets of $\pi_m$ are called \textit{cells}. We refer to a partition $\pi$ where we do not know the number of cells by omitting the subscript $m$.
\begin{example}
    Let $\mathcal{M} = \{1, 2, 3, 4, 5\}$. A $3$-partition $\pi_3$ on $\mathcal{M}$ is $\{\{1, 2\}, \{3\}, \{4, 5\}\}$. The subsets $\mathcal{C}_1 = \{1, 2\}, \mathcal{C}_2 = \{3\}, \mathcal{C}_3 = \{4, 5\}$ are the 3 cells of $\pi_3$.
\end{example}
\textit{Stirling number of the second kind} \cite{book_combinatorics_app,book_comtet}. The Stirling number of the second kind $\mathcal{S}_M^{(k)}$ counts the number of $k$-partitions possible on a set with $M$ elements. Hence, $\mathcal{S}_M^{(k)} = 0 $ for $k > M$ and $\mathcal{S}_M^{(0)} = \delta_{0,M}$, where $\delta$ is the Kronecker delta. A few examples of Stirling numbers are presented in Table \ref{tab:stirling numbers example}.
\begin{table}[H]
    \centering
    \caption{Examples of $\mathcal{S}_N^{(k)}$.}
    \begin{tabular}{|l||*{6}{c|}}\hline
    \backslashbox{N}{k}&
    \makebox[2em]{5}&\makebox[2em]{6}&\makebox[2em]{7}&\makebox[2em]{8}&\makebox[2em]{9}&\makebox[2em]{10}\\\hline\hline
    \makebox[3em]{1}&1&&&&&\\\hline
    \makebox[3em]{2}&1&1&&&&\\\hline
    \makebox[3em]{3}&1&3&1&&&\\\hline
    \makebox[3em]{4}&1&7&6&1&&\\\hline
    \makebox[3em]{5}&1&15&25&10&1&\\\hline
    \makebox[3em]{6}&1&31&90&65&15&1\\\hline
    \makebox[3em]{7}&1&63&301&350&140&21\\\hline
    \makebox[3em]{8}&1&127&966&1701&1050&266\\\hline
    \makebox[3em]{9}&1&255&3025&7770&6951&2646\\\hline
    \makebox[3em]{10}&1&511&9330&34105&42525&22827\\\hline
    \end{tabular}
    \label{tab:stirling numbers example}
\end{table}

\subsection{Bell numbers}\label{apx: bell numbers}
\textit{Bell number} \cite{book_combinatorics_app,book_comtet}. The number of all possible partitions $\pi$ on a set with $M$ elements is the Bell number $\mathcal{B}_M$, which equals the sum of the number of $k$-partitions, for all $k = \{0, 1, ..., M\}.$ In terms of Stirling numbers of the second kind, the Bell Number $\mathcal{B}_M$ is:
\begin{align*}
    \mathcal{B}_M = \sum_{k=0}^M \mathcal{S}_M^{(k)}.
\end{align*}

The first few Bell numbers are presented in Table \ref{tab:bell numbers example}.
\begin{table}[H]
    \centering
    \caption{Bell numbers $\mathcal{B}_M$.}
    \begin{tabular}{|l||*{7}{c|}}\hline
    \makebox[3em]{$M$}&\makebox[2em]{1}&\makebox[2em]{2}&\makebox[2em]{3}&\makebox[2em]{4}&\makebox[2em]{5}&\makebox[2em]{...}&\makebox[2em]{10}\\\hline\hline
    \makebox[3em]{$\mathcal{B}_M$}&1&2&5&15&52&...&115975\\\hline
    \end{tabular}
    
    \label{tab:bell numbers example}
\end{table}

The $k$-th Bell number $\mathcal{B}_{k}$ equals the number of graphs on
$k$-nodes whose subgraphs consists of disconnected cliques. The Bell numbers satisfy the recursion%
\begin{equation}
\mathcal{B}_{n+1}=\sum_{k=0}^{n}\binom{n}{k}\mathcal{B}_{k},
\label{recursion_Bell_numbers}%
\end{equation}
where $\mathcal{B}_{0}=\mathcal{B}_{1}=1$. Since the binomial coefficients
$\binom{n}{k}$ are integers, the recursion (\ref{recursion_Bell_numbers})
indicates that the Bell numbers are also integers. For example, apart
$\mathcal{B}_{0}=\mathcal{B}_{1}=1$, we find that $\mathcal{B}_{2}=2$,
$\mathcal{B}_{3}=5$, $\mathcal{B}_{4}=15$, $\mathcal{B}_{5}=52$,
$\mathcal{B}_{6}=203$, $\mathcal{B}_{7}=877$, $\mathcal{B}_{8}=4140$,
$\mathcal{B}_{9}=21147$ and $\mathcal{B}_{10}=115975$. Another form of the
recursion (\ref{recursion_Bell_numbers}) is%
\[
\frac{\mathcal{B}_{n+1}}{n!}=\sum_{k=0}^{n}\frac{\mathcal{B}_{k}}{k!}\frac
{1}{\left(  n-k\right)  !},%
\]
which motivates us to consider the generating function $F\left(  z\right)
=\sum_{k=0}^{\infty}\frac{\mathcal{B}_{k}}{k!}z^{k}$ of the Bell numbers. We
multiply both sides of the rewritten recusion by $z^{n}$ and summing over all
non-negative integer $n\geq0$,
\[
\sum_{n=0}^{\infty}\frac{\mathcal{B}_{n+1}}{n!}z^{n}=\sum_{n=0}^{\infty
}\left(  \sum_{k=0}^{n}\frac{\mathcal{B}_{k}}{k!}\frac{1}{\left(  n-k\right)
!}\right)  z^{n}.%
\]
The Cauchy product of two power series (see e.g. \cite{PVM_charcoef})%
\[
\sum_{n=0}^{\infty}a_{n}z^{n}\sum_{n=0}^{\infty}b_{n}z^{n}=\sum_{n=0}^{\infty
}\left(  \sum_{k=0}^{n}a_{k}b_{n-k}\right)  z^{n}%
\]
indicates that%
\[
\sum_{n=0}^{\infty}\frac{\mathcal{B}_{n+1}}{n!}z^{n}=\sum_{k=0}^{\infty}%
\frac{\mathcal{B}_{k}}{k!}z^{k}\sum_{k=0}^{\infty}\frac{1}{k!}z^{k}=F\left(
z\right)  e^{z}.%
\]
Differentiating the generating function yields $\frac{dF\left(  z\right)
}{dz}=\sum_{k=1}^{\infty}\frac{\mathcal{B}_{k}}{\left(  k-1\right)  !}%
z^{k-1}=\sum_{n=0}^{\infty}\frac{\mathcal{B}_{n+1}}{n!}z^{n}$ which shows that
the generating function obeys%
\[
\frac{dF\left(  z\right)  }{dz}=F\left(  z\right)  e^{z}%
\]
or $\frac{1}{F\left(  z\right)  }\frac{dF\left(  z\right)  }{dz}=\frac{d}%
{dz}\log F\left(  z\right)  =e^{z}$. After integration with respect to $z$
from 0 to $z$, we find%
\[
\log F\left(  z\right)  -\log F\left(  0\right)  =e^{z}-1
\]
which equals
\[
F\left(  z\right)  =F\left(  0\right)  e^{\left(  e^{z}-1\right)  }.%
\]
Since $F\left(  0\right)  =\mathcal{B}_{0}=1$, we finally  arrive at the explicit form of the generating function of the Bell
numbers%
\begin{equation}
F\left(
z\right)=\sum_{k=0}^{\infty}\frac{\mathcal{B}_{k}}{k!}z^{k}=e^{\left(  e^{z}-1\right)
}=\frac{1}{e}e^{e^{z}}. \label{gf_Bell_numbers}%
\end{equation}

The Taylor expansion of $e^{e^{z}}$ around $z_{0}=0$ is $e^{e^{z}}=\sum
_{n=0}^{\infty}\frac{e^{nz}}{n!}=\sum_{n=0}^{\infty}\frac{1}{n!}\sum
_{k=0}^{\infty}\frac{n^{k}}{k!}z^{k}$ and%
\[
e^{e^{z}}=\sum_{k=0}^{\infty}\left(  \sum_{n=0}^{\infty}\frac{n^{k}}%
{n!}\right)  \frac{z^{k}}{k!}.%
\]
Equating corresponding powers in the above and (\ref{gf_Bell_numbers}) gives
the infinite series for the Bell numbers%
\begin{equation}
\mathcal{B}_{k}=\frac{1}{e}\sum_{n=0}^{\infty}\frac{n^{k}}{n!}
\label{Series_Bell_number}.%
\end{equation}

\subsection{Bell numbers and Stirling numbers}

The \textquotedblleft double\textquotedblright\ generating functions of the
Stirling numbers of the first kind $S_{m}^{(k)}$ and of the second kind $\mathcal{S}%
_{m}^{(k)}$ are \cite{PVM_charcoef}%
\begin{align}
\left(  1+x\right)  ^{u} &  =\sum_{m=0}^{\infty}\sum_{k=0}^{m}\frac
{S_{m}^{(k)}}{m!}u^{k}x^{m},\label{double_genfunc_Stirling_first_kind}\\
e^{u\left(  e^{x}-1\right)  } &  =\sum_{m=0}^{\infty}\sum_{k=0}^{m}%
\frac{\mathcal{S}_{m}^{(k)}}{m!}u^{k}x^{m}.\nonumber
\end{align}
Comparing with (\ref{gf_Bell_numbers}) shows that%
\[
\sum_{m=0}^{\infty}\frac{\mathcal{B}_{m}}{m!}z^{m}=e^{\left(  e^{z}-1\right)
}=\sum_{m=0}^{\infty}\left(  \sum_{k=0}^{m}\frac{\mathcal{S}_{m}^{(k)}}%
{m!}\right)  z^{m}.%
\]
Equating corresponding powers in $z$ results in%
\begin{equation}
\mathcal{B}_{m}=\sum_{k=0}^{m}\mathcal{S}_{m}^{(k)}.%
\label{Bell_number_as_sum_StirlingS2}%
\end{equation}
Finally, invoking the closed form of the Stirling numbers of the second kind
$\mathcal{S}_{m}^{(k)}$ (see e.g. \cite[sec. 24.1.4.C]{book_abramowitz_app})
\begin{equation}
\mathcal{S}_{m}^{(k)}=\frac{1}{k!}\sum_{j=0}^{k}(-1)^{k-j}{\binom{k}{j}}%
j^{m}\label{stirling2closed}%
\end{equation}
then leads to the finite sum%
\[
\mathcal{B}_{m}=\sum_{k=0}^{m}\frac{1}{k!}\sum_{j=0}^{k}(-1)^{k-j}{\binom
{k}{j}}j^{m}.%
\]
Simplifying $\mathcal{B}_{m}=\sum_{k=0}^{m}\sum_{j=0}^{k}(-1)^{k-j}\frac
{j^{m}}{j!\left(  k-j\right)  !}=\sum_{j=0}^{m}(-1)^{j}\frac{j^{m}}{j!}%
\sum_{k=j}^{m}\frac{\left(  -1\right)  ^{k}}{\left(  k-j\right)  !}$ becomes
\begin{equation}
\mathcal{B}_{m}=\sum_{j=0}^{m}\frac{j^{m}}{j!}\sum_{k=0}^{m-j}\frac{\left(
-1\right)  ^{k}}{k!},\label{Bell_closed}%
\end{equation}
where we observe that $\sum_{k=0}^{m-j}\frac{\left(  -1\right)  ^{k}}{k!}$
rapidly tends to $\frac{1}{e}$ with large $m$.  In summary, with
(\ref{Series_Bell_number}), we find that%
\[
\mathcal{B}_{m}=\sum_{j=0}^{m}\frac{j^{m}}{j!}\sum_{k=0}^{m-j}\frac{\left(
-1\right)  ^{k}}{k!}=\frac{1}{e}\sum_{j=0}^{\infty}\frac{j^{m}}{j!}.%
\]
which is quite remarkable.

\subsection{Stirling recursion lemma}\label{apx: stirling recurssion theorem}
\begin{lemma} \label{lemma: stirling recursion lemma}
    The solution to the recursive equation 
    \begin{equation}
    x_m = -\sum_{l=1}^{m-1} \mathcal{S}_m^{(l)} x_l
    \label{recursion_x_m}
    \end{equation}
    with initial conditions $x_1 = 1$ is
    \begin{align*}
        x_m = S_{m}^{(1)} = (-1)^{m-1} (m-1)!
    \end{align*}
    where \(S_{m}^{(1)}\) and \(\mathcal{S}_m^{(l)}\) are the Stirling numbers of the first and second kind.   
\end{lemma}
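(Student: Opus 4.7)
The plan is to recognize the recursion as the Stirling inversion relation in disguise. Since $\mathcal{S}_m^{(m)}=1$, the given recursion $x_m = -\sum_{l=1}^{m-1}\mathcal{S}_m^{(l)}x_l$ is equivalent to the homogeneous identity
\begin{equation*}
\sum_{l=1}^{m}\mathcal{S}_m^{(l)}\,x_l \;=\; 0 \qquad (m\geq 2),
\end{equation*}
paired with the initial value $x_1=1$. Since the coefficient of $x_m$ in this rewritten equation is $\mathcal{S}_m^{(m)}=1\neq 0$, the sequence $(x_m)_{m\geq 1}$ is uniquely determined by this relation together with $x_1=1$.

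The next step is to exhibit $x_l = S_l^{(1)}$ as a solution. The standard duality between the monomial basis and the falling-factorial basis,
\begin{equation*}
x^{(n)} \;=\; \sum_{k=0}^{n} S_n^{(k)}\,x^{k}, \qquad x^{m} \;=\; \sum_{k=0}^{m}\mathcal{S}_m^{(k)}\,x^{(k)},
\end{equation*}
(which is compatible with the double generating function $(1+x)^u=\sum_{m,k}\frac{S_m^{(k)}}{m!}u^k x^m$ recalled in Appendix~\ref{apx: bell numbers}) implies that the triangular matrices with entries $S_m^{(k)}$ and $\mathcal{S}_m^{(k)}$ are inverses of one another:
\begin{equation*}
\sum_{l=n}^{m}\mathcal{S}_m^{(l)}\,S_l^{(n)} \;=\; \delta_{m,n}.
\end{equation*}
Specialising this orthogonality to $n=1$ gives $\sum_{l=1}^{m}\mathcal{S}_m^{(l)}S_l^{(1)}=\delta_{m,1}$, which is precisely the rewritten recursion for $m\geq 2$, while the case $m=1$ yields $S_1^{(1)}=1$, matching the initial condition. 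By uniqueness, $x_m = S_m^{(1)}$.

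Finally, the explicit value $S_m^{(1)}=(-1)^{m-1}(m-1)!$ is obtained by reading off the coefficient of $x^{1}$ in the falling factorial $x(x-1)(x-2)\cdots(x-m+1)$: apart from the factor $x$, the constant term of the remaining product $(x-1)(x-2)\cdots(x-m+1)$ is $(-1)(-2)\cdots(-(m-1))=(-1)^{m-1}(m-1)!$. This completes the identification.

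The proof is almost mechanical once the recursion is put in the symmetric form $\sum_{l=1}^{m}\mathcal{S}_m^{(l)}x_l=0$; the only conceptual step is invoking the Stirling inversion formula, so the main obstacle is really just citing (or briefly justifying via the change-of-basis argument above) the fact that the two Stirling triangles are mutually inverse with the signed convention used in the paper.
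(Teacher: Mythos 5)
Your proof is correct, and it rests on the same key ingredient as the paper's --- the mutual inverseness (orthogonality) of the two Stirling triangles --- but you deploy it in the opposite direction. The paper treats the rewritten system $\sum_{l=1}^{m}\mathcal{S}_m^{(l)}x_l=0$ as an equation to be \emph{solved}: after isolating $\sum_{l=2}^m \mathcal{S}_m^{(l)} x_l = -1$, it multiplies by $S_q^{(m)}$, sums over $m$, and applies the orthogonality $\sum_{m=l}^{q}S_{q}^{(m)}\mathcal{S}_{m}^{(l)}=\delta_{lq}$ to extract $x_q = -\sum_{m=2}^{q}S_q^{(m)}$; it then needs one further step, evaluating the generating function $\prod_{k=0}^{q-1}(x-k)=\sum_k S_q^{(k)}x^k$ at $x=1$, to collapse that sum to $S_q^{(1)}$. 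You instead observe that the triangular system with unit diagonal has a unique solution and \emph{verify} the candidate $x_l=S_l^{(1)}$ directly, since the homogeneous relation $\sum_{l=1}^{m}\mathcal{S}_m^{(l)}S_l^{(1)}=\delta_{m,1}$ is exactly the $n=1$ column of the inversion identity. This verify-plus-uniqueness organization is slightly more economical --- it makes the final generating-function computation unnecessary and exposes the recursion as literally reading off a column of the inverse Stirling matrix --- at the cost of having to state uniqueness explicitly, which the paper's constructive derivation gets for free. Both arguments are complete; your closing computation of $S_m^{(1)}=(-1)^{m-1}(m-1)!$ as the constant term of $(x-1)(x-2)\cdots(x-m+1)$ is also fine.
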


\begin{proof} 
Since $\mathcal{S}_{m}^{(k)}$ is the Stirling number of the second kind with $\mathcal{S}_{m}^{(m)}=1$ and $\mathcal{S}_{m}^{(1)}=1$, equation (\ref{recursion_x_m}) is equivalent to
\begin{equation}
x_m +\sum_{l=1}^{m-1} \mathcal{S}_m^{(l)} x_l = \mathcal{S}_{m}^{(m)}x_m + \sum_{l=1}^{m-1} \mathcal{S}_m^{(l)} x_l=
\sum_{l=1}^{m}\mathcal{S}_{m}^{(l)}x_{l}=0.\label{governing_eq}%
\end{equation}
Substitution of the initial condition $x_1 = 1$ and $\mathcal{S}_{m}^{(1)}=1$ into (\ref{governing_eq}) yields
\begin{equation} \label{governing_eq_simplified}
\sum_{l=2}^{m}\mathcal{S}_{m}^{(l)}x_{l}=-1.
\end{equation}
Multiplying both sides of (\ref{governing_eq_simplified}) by the Stirling number of the first kind $S_{q}^{(m)}$
\[
S_{q}^{(m)}\sum_{l=2}^{m}\mathcal{S}_{m}^{(l)}x_{l}=-S_{q}^{(m)}%
\]
and summing over the integers $m$ yields%
\[
\sum_{m=2}^{b}\sum_{l=2}^{m}S_{q}^{(m)}\mathcal{S}_{m}^{(l)}x_{l}=-\sum
_{m=2}^{b}S_{q}^{(m)}.%
\]
We reverse the summations%
\[
\sum_{m=2}^{b}\sum_{l=2}^{m}S_{q}^{(m)}\mathcal{S}_{m}^{(l)}x_{l}=\sum
_{l=2}^{b}\left(  \sum_{m=l}^{b}S_{q}^{(m)}\mathcal{S}_{m}^{(l)}\right)  x_{l}.%
\]
The second orthogonality formula of the Stirling numbers \cite[equation (13.14)]{book_combinatorics_app}%
\begin{equation}
\sum_{m=l}^{q}S_{q}^{(m)}\mathcal{S}_{m}^{(l)}=\delta_{lq}
\label{orthogonality_StirlingNumbers_2}%
\end{equation}
suggests to choose $b=q$ so that, for $q\geq2$,%
\[
-\sum_{m=2}^{q}S_{q}^{(m)}=\sum_{l=2}^{q}\left(  \sum_{m=l}^{q}S_{q}%
^{(m)}\mathcal{S}_{m}^{(l)}\right)  x_{l}=\sum_{l=2}^{q}\delta_{lq}x_{l}=x_{q}.%
\]
The generating function of the Stirling numbers of the first kind,
$m!\binom{x}{m}=\frac{\Gamma(x+1)}{\Gamma(x+1-m)}=\prod_{k=0}^{m-1}%
(x-k)=\sum_{k=0}^{m}S_{m}^{(k)}\;x^{k}$, shows that%
\[
\sum_{m=2}^{q}S_{q}^{(m)}=\sum_{k=0}^{q}S_{q}^{(k)}-S_{q}^{(1)}=-S_{q}^{(1)}.%
\]
Finally, $S_{q}^{(1)}=\left(  -1\right)  ^{q-1}\left(  q-1\right)  !$ and we
arrive, for $q\geq1$, at%
\[
x_{q}=S_{q}^{(1)}=\left(  -1\right)  ^{q-1}\left(  q-1\right)  !
\]
which proves Lemma \ref{lemma: stirling recursion lemma}. 
\end{proof}

\section{Algorithm for the probability distribution of RWIG graphs}\label{apx: thm1 algo}
\begin{algorithm}[H]
\caption{CalcSigma}\label{alg:calc sigma}
  \textbf{Input:}  $\textbf{s}_\mathcal{\mathcal{A}}[0]$ - initial states of walkers from set $\mathcal{A}$,\\ 
  \hspace*{1.2cm} $\textbf{P}_\mathcal{A}$ - Markov transition matrices of walkers from set $\mathcal{A}$,\\
  \hspace*{1.2cm} $k$ - discrete time.\\
  \textbf{Output:} $\sigma$ - the probability that walkers from set $\mathcal{A}$ are in the same state at discrete time $k$.
\begin{algorithmic}
\State $h \gets [1\ 1\ \cdots\ 1]$
\ForEach{$w \in \mathcal{A}$}
    \State $s_w[k] \gets s_w[0] P_w^k $
    \State $h \gets h \circ s_w[0]$
\EndFor
\State $\sigma \gets h \cdot [1\ 1\ \cdots\ 1]^T$
\State \Return $\sigma$
\end{algorithmic}
\end{algorithm}

\begin{algorithm}[H]
\caption{RWIG-pmf}\label{alg: thm1 algo}
\textbf{Input:}  $\mathcal{M}$ The set of $M$ random walkers, \\
\hspace*{1.2cm} $\textbf{s}_\mathcal{\mathcal{M}}[0]$ - initial states of \(M\) walkers,\\ 
  \hspace*{1.2cm} $\textbf{P}_\mathcal{M}$ - Markov transition matrices of \(M\) walkers,\\
  \hspace*{1.2cm} $k$ - discrete time,\\
  \hspace*{1.2cm} $g$ - contact graph realisation (partition of the walker set $\mathcal{M}$).\\
  \textbf{Output:} $p$ - the probability of $g$ at discrete time $k$. 
  
\begin{algorithmic}
\State $p \gets 0$
\State $\mathcal{P}^*_g \gets $  \textit{NonTrivialPartitions}($g$)

\ForEach {$\pi_g \in \mathcal{P}^*_g$}
    \State $\beta(\pi_g) \gets 1$
    \State $\sigma \gets 1$
    \ForEach{$\mathcal{C} \in \pi_g$}
        \State $\beta(\pi_g) \gets \beta(\pi_g) \times (-1)^{|\mathcal{C}|-1}(|\mathcal{C}|-1)!$
            \ForEach{$\mathcal{A} \in \mathcal{C}$}
                \State $\sigma \gets \sigma \times \text{\textit{CalcSigma}}(\mathcal{A}, \textbf{P}_\mathcal{A}, \mathbf{s}_\mathcal{A}[0], k)$
            \EndFor
    \EndFor
    \State $p \gets p + \sigma \beta(\pi_g)$
\EndFor
\State \Return $p$
\end{algorithmic}
\end{algorithm}

We use the Python programming language for our codebase. For the \textit{NonTrivialPartitions($g$)} function, we use a Python implementation of enumerating $l$-partitions using the \texttt{more-itertools} library. For all $l = \{1, 2, ..., |g| - 1\}$, we generate a list of $l$-partitions on $g$ using the \texttt{set\_partitions} function from the \texttt{more-itertools} library and concatenate the lists to create the set of non-trivial partitions $\mathcal{P}^*_g$.

\section{Computational complexity}\label{apx: time complexity}
For a random choice of discrete time $k$ and transition rate matrix $P$, we record the computational complexity of the entire probability distribution for an RWIG contact graph calculated using both Theorems \ref{thm: m cliques} and \ref{thm: main theorem}. The ratio $r = \frac{t_1}{t_2}$ between the execution time \(t_{1}\) using Theorem \ref{thm: m cliques} and the execution time \(t_{2}\) using Theorem \ref{thm: main theorem} is used to quantify the $r$-fold decrease in execution time when using Theorem \ref{thm: main theorem} instead of Theorem \ref{thm: m cliques}.

We tabulate in Figure \ref{fig:speedup} the speedup ratio \(r\) for various combinations of number of states $N$ and number of walkers $M$, and average the execution times over multiple iterations of calculating the probability distribution of the RWIG graph. We use Python's inbuilt \texttt{time.time()} function to record execution time. The reduction in time complexity quickly becomes relevant for small values of $M, N$, where for $M=N=7$ the 35-fold decrease in execution time speeds up the probability distribution calculation from 18.29 seconds to 0.52 seconds.

\begin{figure}[!t]
    \centering
    \includegraphics[width = \linewidth]{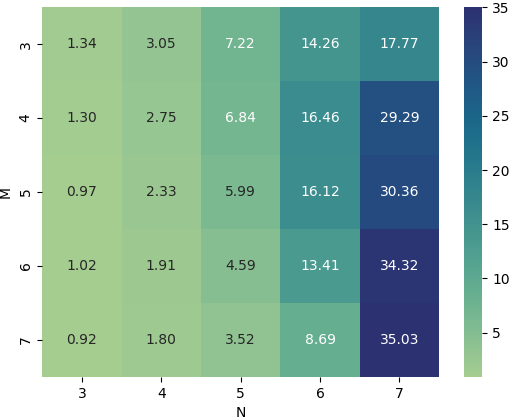}
    \caption{Speedup ratio heatmap for various combinations of $M$ and $N$}
    \label{fig:speedup}
\end{figure}

\section{Sigma products weights}\label{apx: lemma 1 2 proof}
\subsection{$1$-partition weight $\beta_m(\pi_1)$}
\setcounter{lemma}{0}
\begin{lemma}
    Let $\pi_1 = \{\mathcal{M}\}$ be the $1$-partition on the walker set $\mathcal{M}$. The number $\beta_m(\pi_1)$  of sigma product terms $\sigma_\mathcal{M}[k]$ in the sigma expansion formula (\ref{eq:sigma expansion}) for the probability of an $m$-clique contact graph depends only on the number of cliques $m$ as
    \begin{equation}
        \beta_m(\pi_1) = (-1)^{(m-1)}(m-1)!
    \end{equation}
\end{lemma}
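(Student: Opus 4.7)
The plan is to prove the lemma by strong induction on $m$, using the recursive identity (\ref{eq: kcliques recursive}) as the driving equation. The base case $m=1$ is immediate: the $1$-clique contact graph is $\{\mathcal{M}\}$ itself, so $\Pr[G_k{=}\{\mathcal{M}\}] = \sigma_{\mathcal{M}}[k]$ and $\beta_1(\pi_1) = 1 = (-1)^{0}\cdot 0!$.

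For the inductive step ($m \geq 2$), I would fix an $m$-clique contact graph $g=\{\mathcal{A}_1,\ldots,\mathcal{A}_m\}$ and extract the coefficient of $\sigma_{\mathcal{M}}[k]$ on both sides of (\ref{eq: kcliques recursive}). The leading product $\prod_{j=1}^m \sigma_{\mathcal{A}_j}[k]$ contributes nothing, because each $\mathcal{A}_j$ is a proper subset of $\mathcal{M}$ (since $m\geq 2$), so $\sigma_{\mathcal{M}}[k]$ does not appear. Hence the coefficient must come entirely from $-\sum_{\pi \in \mathcal{P}_g^*} \Pr[G_k = g(\pi)]$. Partitioning $\mathcal{P}_g^*$ by the number of cells $q \in \{1,\ldots,m-1\}$, the number of $q$-cell partitions of the $m$-element set $g$ is the Stirling number of the second kind $\mathcal{S}_m^{(q)}$, and each such $\pi$ gives an amassed $q$-clique contact graph $g(\pi)$. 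By the strong induction hypothesis, the coefficient of $\sigma_{\mathcal{M}}[k]$ in $\Pr[G_k = g(\pi)]$ equals $\beta_q(\pi_1)$, independent of the particular structure of $g(\pi)$. Collecting terms yields the recursion
\begin{equation*}
    \beta_m(\pi_1) = -\sum_{q=1}^{m-1} \mathcal{S}_m^{(q)} \,\beta_q(\pi_1),
\end{equation*}
with initial condition $\beta_1(\pi_1)=1$.

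This is exactly the recurrence (\ref{recursion_x_m}) treated by the Stirling recursion lemma (Lemma \ref{lemma: stirling recursion lemma}) in Appendix \ref{apx: stirling recurssion theorem}, whose solution is $\beta_m(\pi_1) = S_m^{(1)} = (-1)^{m-1}(m-1)!$, completing the induction.

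The main subtlety will be justifying that the coefficient of $\sigma_{\mathcal{M}}[k]$ in $\Pr[G_k = g(\pi)]$ depends only on the \emph{number} of cliques in $g(\pi)$ and not on their composition; this is precisely what the lemma asserts at smaller scales, so the strong-induction framing is essential. Everything else reduces to a clean bookkeeping argument: separating the singleton partition from $\mathcal{P}_g^*$, grouping $\mathcal{P}_g^*$ by cell count, and invoking the Stirling recursion lemma as a black box.
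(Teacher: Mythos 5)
Your proposal is correct and follows essentially the same route as the paper's own proof: both derive the recursion $\beta_m(\pi_1) = -\sum_{l=1}^{m-1}\mathcal{S}_m^{(l)}\beta_l(\pi_1)$ by grouping the non-singleton partitions in (\ref{eq: kcliques recursive}) according to their number of cells, apply the (strong) induction hypothesis that the $\sigma_{\mathcal{M}}[k]$ coefficient of an $l$-clique amassed graph depends only on $l$, and then invoke the Stirling recursion lemma of Appendix \ref{apx: stirling recurssion theorem}. Your explicit remarks that the leading product $\prod_j \sigma_{\mathcal{A}_j}[k]$ contributes no $\sigma_{\mathcal{M}}[k]$ term for $m\geq 2$ and that the strong-induction framing is what licenses structure-independence are implicit but present in the paper's argument.
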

\begin{proof}
    In (\ref{eq:2cliques closedform 1}), we show that any 2-clique contact graph probability has the same $\sigma_\mathcal{M}[k]$ weight $\beta_2(\pi_1) = -1$, because
    \begin{align}\label{eq:2cliques closedform}
\Pr[G_k = \{\mathcal{A}_1, \mathcal{A}_2\} ] &= \sigma_{\mathcal{A}_1}[k]\sigma_{\mathcal{A}_2}[k] - \sigma_\mathcal{M}[k]\nonumber\\
&= \beta_2(\{\{\mathcal{A}_1\}, \{\mathcal{A}_2\}\}) \sigma_{\mathcal{A}_1}[k]\sigma_{\mathcal{A}_2}[k]\nonumber\\
&\ \ \ + \beta_2(\pi_1) \sigma_\mathcal{M}[k].
    \end{align}
    Similarly, equation (\ref{eq: 3cliques closedform}) indicates that $\beta_3(\pi_1) = 2$. Naturally, $\beta_1(\pi_1) = 1$, because the probability $\Pr[G_k = \{\mathcal{M}\} ] = \Pr[\text{all walkers in same Markov state at time $k$}] := \sigma_\mathcal{M}[k]$. We prove the general case for $\beta_m(\pi_1)$ by induction.
    
    Consider all non-singleton \textit{l}-partitions on a \textit{m}-clique contact graph $g =  \{\mathcal{A}_1, \mathcal{A}_2, ..., \mathcal{A}_m\}$, where $l < m$. Splitting the summation over all non-singleton partitions $\mathcal{P}^{*}_{g}$ in (\ref{eq: kcliques recursive}) into \textit{l}-partitions yields:
    \begin{equation}\label{eq:l-partition expansion}
        \Pr[G_k = g ] = \prod_{i=1}^m \sigma_{\mathcal{A}_i}[k] - \sum_{l=1}^{m-1} \sum_{\pi \in \mathcal{P}_{g}(l)} \Pr[G_k = g(\pi) ],
    \end{equation}
    where $\mathcal{P}_{g}(l)$ is the set of \textit{l}-partitions on $g$. Consequently, each amassed-clique contact graph $g(\pi)$ is an \textit{l}-clique contact graph. 
    
    Assume that the number of terms $\sigma_{\mathcal{M}}$ in the sigma expansion (\ref{eq:sigma expansion}) of any \textit{l}-clique contact graph is $\beta_l(\pi_1), \text{for each positive integer}\ l < m$. Appendix \ref{apx: bell,stirling definitions} shows that the Stirling number of the second kind $\mathcal{S}_m^{(l)}$ is equal to the number of $l$-partitions on a set with $m$. Therefore, the number of elements in  the set $\mathcal{P}_{g}(l)$ is $|\mathcal{P}_{g}(l)| = \mathcal{S}_m^{(l)}$. We perform a sigma expansion for each amassed contact graph $g(\pi)$ in (\ref{eq:l-partition expansion}). Since we are interested in equating $\sigma_\mathcal{M}[k]$ terms  in (\ref{eq:sigma expansion}) and the sigma expansions in (\ref{eq:l-partition expansion}), we omit all terms except for $\sigma_\mathcal{M}[k]$,
    \begin{align}\label{eq: expanding}
        \Pr[G_k = g] &= \prod_{i=1}^m \sigma_{\mathcal{A}_i}[k] - \sum_{l=1}^{m-1} \sum_{\pi \in \mathcal{P}_{g}(l)} \Pr[G_k = g(\pi) ] \nonumber\\
        &= -\sum_{l=1}^{m-1} \mathcal{S}_m^{(l)} \beta_l(\pi_1) \sigma_\mathcal{M}[k] + ...
    \end{align}
    Equating $\sigma_\mathcal{M}[k]$ terms in (\ref{eq:sigma expansion}) and (\ref{eq: expanding}) yields the recursion
    \begin{equation}\label{eq: recursive stirling}
        \beta_m(\pi_1) = -\sum_{l=1}^{m-1} \mathcal{S}_m^{(l)} \beta_l(\pi_1).
    \end{equation}
    Lemma \ref{lemma: stirling recursion lemma} in Appendix \ref{apx: stirling recurssion theorem} gives the solution of recursion (\ref{eq: recursive stirling}) with initial condition $\beta_1(\pi_1)=1$ as
    \(
        \beta_m(\pi_1) = (-1)^{(m-1)}(m-1)!
    \), which demonstates Lemma \ref{lemma: trivial partition weight}.
\end{proof}

\subsection{General $q$-partition weight $\beta_m(\pi_q)$}
\begin{lemma}
    Let $g$ be a  $m$-clique contact graph. Let $\pi_q = \{\mathcal{C}_1, ..., \mathcal{C}_q\}$ be a $q$-partition on $g$, with $q < m$. Let the cardinality of each cell $\mathcal{C}_i$ be $c_i$. Let the number of sigma product terms $\prod_{i=1}^q \sigma_{\mathcal{C}_i}[k]$ in the sigma expansion formula of $g$ be $\beta_m(\pi_q)$. Then
    \begin{equation}
        \beta_m(\pi_q) = \prod_{i=1}^q (-1)^{c_i-1} (c_i-1)!
    \end{equation}
\end{lemma}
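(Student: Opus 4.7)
The plan is induction on the number of cliques $m$ in $g$, with the combinatorial core being the same Stirling recursion already exploited in Lemma \ref{lemma: trivial partition weight}. The base case $m=2$ reduces to $q=1$ with $c_1=2$, which is Lemma \ref{lemma: trivial partition weight}. For the inductive step, I fix a $q$-partition $\pi_q=\{\mathcal{C}_1,\ldots,\mathcal{C}_q\}$ of an $m$-clique graph $g$ with $q<m$, let $\bar{\mathcal{C}}_i$ denote the union of walkers across the cliques inside cell $\mathcal{C}_i$, and track the coefficient of the sigma product $\prod_{i=1}^{q}\sigma_{\bar{\mathcal{C}}_i}[k]$ on both sides of the recursion (\ref{eq: kcliques recursive}).

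The leading term $\prod_j\sigma_{\mathcal{A}_j}[k]$ contributes $0$ to this coefficient because $\pi_q\neq\pi_m$. For a summand $\Pr[G_k=g(\pi)]$ with $\pi\in\mathcal{P}^{*}_{g}$, the sigma expansion of $g(\pi)$ (a $|\pi|$-clique graph with $|\pi|<m$) produces the target product exactly when some partition $\rho$ of $g(\pi)$'s cliques re-amasses them to $\{\bar{\mathcal{C}}_1,\ldots,\bar{\mathcal{C}}_q\}$. Since the cliques of $g$ (and hence the amassed cliques of $g(\pi)$) are pairwise disjoint sets of walkers, this occurs precisely when $\pi$ \textit{refines} $\pi_q$, and in that case $\rho$ is uniquely determined by grouping the cells of $\pi$ according to the $\mathcal{C}_i$ in which each is contained; the $i$-th cell of $\rho$ has size $a_i(\pi) = |\{\mathcal{D}\in\pi:\mathcal{D}\subseteq\mathcal{C}_i\}|$. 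By the inductive hypothesis (or, when $\rho$ is the singleton partition of $g(\pi)$, by reading off the leading term of its sigma expansion), $\beta_{|\pi|}(\rho)=\prod_{i=1}^{q}(-1)^{a_i(\pi)-1}(a_i(\pi)-1)!$.

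Equating coefficients therefore yields
\begin{equation*}
\beta_m(\pi_q) = -\!\!\sum_{\substack{\pi\text{ refines }\pi_q\\ \pi\neq\pi_m}}\prod_{i=1}^{q}(-1)^{a_i(\pi)-1}(a_i(\pi)-1)!.
\end{equation*}
The crucial observation is that a refinement of $\pi_q$ is precisely an independent choice of sub-partition inside each cell $\mathcal{C}_i$, so the sum over \emph{all} refinements factorises:
\begin{equation*}
\sum_{\pi\text{ refines }\pi_q}\prod_{i=1}^{q}(-1)^{a_i-1}(a_i-1)! \;=\; \prod_{i=1}^{q}\sum_{a=1}^{c_i}\mathcal{S}_{c_i}^{(a)}(-1)^{a-1}(a-1)!.
\end{equation*}
Lemma \ref{lemma: stirling recursion lemma} says each inner sum equals $1$ when $c_i=1$ and $0$ otherwise; since $\pi_q\neq\pi_m$ forces some $c_i\geq 2$, the full product vanishes. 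Separating off the $\pi=\pi_m$ contribution (which equals $\prod_{i=1}^{q}(-1)^{c_i-1}(c_i-1)!$) then gives $\beta_m(\pi_q)=\prod_{i=1}^{q}(-1)^{c_i-1}(c_i-1)!$, closing the induction.

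The main obstacle is the combinatorial identification at the start of the inductive step: showing that $\Pr[G_k=g(\pi)]$ contributes to the target coefficient if and only if $\pi$ refines $\pi_q$, that the matching partition $\rho$ on $g(\pi)$ is unique, and that its cell sizes are exactly the $a_i(\pi)$. Once this bijection between refinements of $\pi_q$ and contributing partitions is pinned down, the cell-wise factorisation is automatic and Lemma \ref{lemma: stirling recursion lemma} does the rest.
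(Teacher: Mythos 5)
Your proof is correct, but it follows a genuinely different route from the paper's. The paper proves Lemma \ref{lemma: general} by exploiting walker independence \emph{across cells}: it writes $\Pr[G_k=g]$ as the product over cells $\mathcal{C}_i$ of the probabilities that the walkers of $\mathcal{C}_i$ form the sub-contact-graph $g(i)$ (equation (\ref{eq: general expansion in general lemma})), applies Lemma \ref{lemma: trivial partition weight} to each factor to extract its leading $\sigma_{\mathcal{C}_i}[k]$ coefficient $(-1)^{c_i-1}(c_i-1)!$, and multiplies. You instead run a single induction on $m$ directly on the recursion (\ref{eq: kcliques recursive}), which is essentially the paper's proof technique for Lemma \ref{lemma: trivial partition weight} generalised from the trivial partition to an arbitrary $\pi_q$: you extract the coefficient of $\prod_i\sigma_{\bar{\mathcal{C}}_i}[k]$, show that only partitions $\pi$ refining $\pi_q$ contribute (with a uniquely determined re-amassing $\rho$), factorise the sum over refinements cell-wise into $\prod_i\sum_a\mathcal{S}_{c_i}^{(a)}(-1)^{a-1}(a-1)!$, and invoke Lemma \ref{lemma: stirling recursion lemma} to make the full sum vanish, leaving the $\pi=\pi_m$ term. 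Each approach buys something: the paper's argument is shorter, but its factorisation $\Pr[G_k=g]=\prod_i\Pr[G_k(i)=g(i)]$ is, read literally as an identity of probabilities, not exact --- the right-hand side also counts realisations in which cliques from different cells collide in the same Markov state --- and is only valid at the level of the particular sigma coefficient being tracked; your purely combinatorial coefficient-matching avoids that delicacy entirely and, as a bonus, subsumes Lemma \ref{lemma: trivial partition weight} as the case $q=1$. The price is the refinement bijection you correctly identify as the main obstacle, which your argument handles properly since the cliques of $g$ are pairwise disjoint and therefore act as atoms for all partitions involved.
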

\begin{proof}
    Consider $q$ distinct sets of walkers $\{\mathcal{C}_{i}\}_{i=1}^q$. Each set $\mathcal{C}_i$ forms a $c_i$-clique contact graph $g(i) = \{\mathcal{A}_1(i),... \mathcal{A}_{c_i}(i)\}$. For each set $\mathcal{C}_i$, the sigma expansion formula (\ref{eq:sigma expansion}) for the probability $\Pr[G_k(i) = g(i) | \textbf{P}_{\mathcal{C}_i}, \mathbf{s}_{\mathcal{C}_i}[0], k]$ contains $\beta_{c_i}(\{\mathcal{C}_i\}) = \beta_{c_i}(\pi_1)= (-1)^{c_i-1} (c_i-1)!$  terms $\sigma_{\mathcal{C}_i}[k]$ by Lemma \ref{lemma: trivial partition weight}. 
    
    Consider now the union of all walker sets $\mathcal{M} = \cup_{i=1}^q \mathcal{C}_i$. By definition (\ref{eq:sigmaA}), the probability that all walkers from $\mathcal{C}_i$ are in the same state equals $\sigma_{\mathcal{C}_i}[k]$. Since walkers are all independent, the contact graphs $g(i)$, formed by the walker sets $\mathcal{C}_i$, for all $i \in \{1, 2, ..., q\}$ are also independent. Therefore, the probability of the contact graph realisation $g$ generated by all walkers in $\mathcal{M}$ is
    \begin{align}\label{eq: general expansion in general lemma}
        \Pr[G_k = g ] = \prod_{i=1}^q \Pr[G_k(i) = g(i) | \textbf{P}_{\mathcal{C}_i}, \mathbf{s}_{\mathcal{C}_i}[0], k].
    \end{align}
    Performing a sigma expansion for each contact graph probability term $\Pr[G_k(i) = g(i) |  \textbf{P}_{\mathcal{C}_i}, \mathbf{s}_{\mathcal{C}_i}[0], k]$ in (\ref{eq: general expansion in general lemma}) and omitting all terms, which are not $\prod_{i=1}^q \sigma_{\mathcal{M}_i}[k]$, yields
    \begin{align*}
        \Pr[G_k = g ] = \prod_{i=1}^q \beta_{c_i}(\{\mathcal{C}_i\}) \sigma_{\mathcal{C}_i}[k] + ...
    \end{align*}
    But $\beta_{c_i}(\{\mathcal{C}_i\}) = (-1)^{c_i-1} (c_i-1)!$ and thus:
    \begin{align}\label{eq:general pi' partition weight}
        \Pr[G_k = g ] = \left(\prod_{i=1}^q (-1)^{c_i-1} (c_i-1)!\right)\left( \prod_{i=1}^q \sigma_{\mathcal{C}_i}[k] \right) + ...
    \end{align}
    Since the sets $\{\mathcal{C}_{i}\}_{i=1}^q$ represent the cells of a $q$-partition on the walker set $\mathcal{M}$, equation (\ref{eq:general pi' partition weight}) states that the weight $\beta_{|\mathcal{M}|}(\pi_q)$ of the sigma product term $\prod_{i=1}^q \sigma_{\mathcal{C}_i}[k]$ associated with a general $q$-partition $\pi_q = \{\mathcal{C}_1, ..., \mathcal{C}_q\}$ on $g$ is $\prod_{i=1}^q (-1)^{c_i-1} (c_i-1)!$.
\end{proof}

\section{4-clique contact graph probability}\label{apx: 4 clique graph proba}
Let $g = \{\mathcal{A}_1, \mathcal{A}_2, \mathcal{A}_3, \mathcal{A}_4\}$. The formula for the probability \(\Pr[G_k = g]\) is

\begin{align}
        \Pr[G_k = g] = & \sigma_{\mathcal{A}_1}[k] \sigma_{\mathcal{A}_2}[k] \sigma_{\mathcal{A}_3}[k] \sigma_{\mathcal{A}_4}[k] \nonumber \\
    &- \Pr[G_k = \{\mathcal{A}_{12}, \mathcal{A}_3, \mathcal{A}_4\}] \nonumber \\
    &- \Pr[G_k = \{\mathcal{A}_{13}, \mathcal{A}_2, \mathcal{A}_4\}] \nonumber \\
    &- \Pr[G_k = \{\mathcal{A}_{14}, \mathcal{A}_2, \mathcal{A}_3\}] \nonumber \\
    &- \Pr[G_k = \{\mathcal{A}_{23}, \mathcal{A}_1, \mathcal{A}_4\}] \nonumber \\
    &- \Pr[G_k = \{\mathcal{A}_{24}, \mathcal{A}_1, \mathcal{A}_3\}] \nonumber \\
    &- \Pr[G_k = \{\mathcal{A}_{34}, \mathcal{A}_1, \mathcal{A}_2\}] \nonumber \\
    \nonumber\\
    &- \Pr[G_k = \{\mathcal{A}_{12}, \mathcal{A}_{34}\}] \nonumber \\
    &- \Pr[G_k = \{\mathcal{A}_{13}, \mathcal{A}_{24} \}] \nonumber \\
    &- \Pr[G_k = \{\mathcal{A}_{14}, \mathcal{A}_{23}\}] \nonumber \\
    \nonumber\\
    &- \Pr[G_k = \{\mathcal{A}_{123}, \mathcal{A}_4\}] \nonumber \\
    &- \Pr[G_k = \{\mathcal{A}_{124}, \mathcal{A}_3\}] \nonumber \\
    &- \Pr[G_k = \{\mathcal{A}_{134}, \mathcal{A}_2\}] \nonumber \\
    &- \Pr[G_k = \{\mathcal{A}_{234}, \mathcal{A}_1\}] \nonumber \\
    \nonumber\\
    &- \Pr[G_k = \{\mathcal{M}\}].\nonumber
    \end{align}
Introducing the sigma terms definition (\ref{eq:sigmaA}) yields 
\begin{align}
    &\Pr[G_k = g] = \sigma_{\mathcal{A}_1}[k] \sigma_{\mathcal{A}_2}[k] \sigma_{\mathcal{A}_3}[k] \sigma_{\mathcal{A}_4}[k] \nonumber\\
    &- \sigma_{\mathcal{A}_{12}}[k] \sigma_{\mathcal{A}_3}[k] \sigma_{\mathcal{A}_4}[k] - \sigma_{\mathcal{A}_{13}}[k] \sigma_{\mathcal{A}_2}[k] \sigma_{\mathcal{A}_4}[k]\nonumber\\
    &-\sigma_{\mathcal{A}_{14}}[k] \sigma_{\mathcal{A}_2}[k] \sigma_{\mathcal{A}_3}[k] - \sigma_{\mathcal{A}_{23}}[k] \sigma_{\mathcal{A}_1}[k] \sigma_{\mathcal{A}_4}[k]\nonumber\\
    &- \sigma_{\mathcal{A}_{24}}[k] \sigma_{\mathcal{A}_1}[k] \sigma_{\mathcal{A}_3}[k] - \sigma_{\mathcal{A}_{34}}[k] \sigma_{\mathcal{A}_1}[k] \sigma_{\mathcal{A}_2}[k] \nonumber\\
    &+2(\sigma_{\mathcal{A}_{123}}[k]\sigma_{\mathcal{A}_4}[k] + \sigma_{\mathcal{A}_{124}}[k]\sigma_{\mathcal{A}_3}[k] + \sigma_{\mathcal{A}_{134}}[k]\sigma_{\mathcal{A}_2}[k]\nonumber\\
    & + \sigma_{\mathcal{A}_{234}}[k]\sigma_{\mathcal{A}_1}[k]) \nonumber\\
    &+ \sigma_{\mathcal{A}_{12}}[k]\sigma_{\mathcal{A}_{34}}[k] + \sigma_{\mathcal{A}_{13}}[k]\sigma_{\mathcal{A}_{24}}[k] + \sigma_{\mathcal{A}_{14}}[k]\sigma_{\mathcal{A}_{23}}[k] \nonumber\\
    &- 6 \sigma_{\mathcal{M}}[k].\nonumber
\end{align}

\InputIfFileExists{bibliography.bbl}{}{}


\begin{thebibliography}{1}
\bibitem{covid1} 
Yap K. Y.-L. \& Xie Q. "Personalizing symptom monitoring and contact tracing efforts through a COVID-19 web-app",
\newblock {\em Infect Dis Poverty} 9, 93, 2020.

\bibitem{covid2} 
Cencetti G., Santin G., Longa A., Pigani E., Barrat A., Cattuto C., Lehmann S., Salathé M. \&  Lepri B. "Digital proximity tracing on empirical contact networks for pandemic control",
\newblock {\em Nat Commun}
\newblock 12, 1655, 2021. 

\bibitem{covid3}
World Health Organization.
\newblock {\em Digital Tools for COVID-19 Contact Tracing},
\newblock {Geneva, Switzerland, 2020. \url{https://who.int/publications/i/item/WHO-2019-nCoV-Contact\_Tracing-Tools\_Annex-2020.1}}.

\bibitem{Iribarren2009}
Iribarren J.L. \& Moro E. "Impact of Human Activity Patterns on the Dynamics of Information Diffusion",
\newblock{\em Phys. Rev. Lett.} 103. 038702, 2009.

\bibitem{Karsai2013}
Karsai M., Kivel\"a M., Pan R.K., Kaski K., Kert\'esz J., Barab\'asi A.-L. \& Saram\"aki J. "Small but slow world: How network topology and burstiness slow down spreading",
\newblock {\em Phys Rev. E}
\newblock 83, 2013.

\bibitem{Takaguchi2013}
Takaguchi P. "Bursty communication patterns facilitate spreading in a threshold-based epidemic dynamics",
\newblock{\em PLOS ONE}
\newblock{8(7), 1-5, 2013.}

\bibitem{Perotti2014} Perotti J.I., Jo H.-H., Holme P. \& Saramäki J. "Temporal network sparsity and the slowing down of spreading",
\newblock{\em arXiv}:1411.5553,2014.

\bibitem{Masuda2014}
Masuda N., Klemm K., \& Eguíluz V. "Temporal Networks: Slowing Down Diffusion by Long Lasting Interactions",
\newblock{\em Phys. Rev. Lett.}
\newblock{111, 188701, 2013.}

\bibitem{Morris1995}
Morris M. \& Kretzschmar M. "Concurrent partnerships and transmission dynamics in networks",
\newblock{\em Soc. Netw.}
\newblock{17(3), 299-318, 1995.}

\bibitem{Lentz2013}
Lentz H.H.K., Selhorst T. \& Sokolov I.M. "Unfolding Accessibility Provides a Macroscopic Approach to Temporal Networks",
\newblock {\em Phys. Rev. Lett.}
\newblock 110, 118701, 2013.

\bibitem{Pan2011}
Pan R.K. \& Saramäki J. "Path lengths, correlations, and centrality in temporal networks",
\newblock {\em Phys. Rev. E}
\newblock 84, 016105, 2011.

\bibitem{Holme2012}
Holme P. \& Saramäki J. "Temporal networks",
\newblock {\em Phys. Rep.}
\newblock 519(3), 97-125, 2012.

\bibitem{Pfitzner2013}
Pfitzner R., Scholtes I., Garas A., Tessone C.J. \& Schweitzer F. "Betweenness Preference: Quantifying Correlations in the Topological Dynamics of Temporal Networks",
\newblock {\em Phys. Rev. Lett.} 110, 198701, 2013.

\bibitem{Scholtes2014}
Scholtes I., Wider N., Pfitzner R., Garas A., Tessone C.J. \& Schweitzer F. "Causality-driven slow-down and speed-up of diffusion in non-Markovian temporal networks",
\newblock{\em Nat Commun}
\newblock 5, 5024, 2014.

\bibitem{Rosvall2014} 
Rosvall M., Esquivel A.V., Lancichinetti A., West J.D. \& Lambiotte R. "Memory in network flows and its effects on spreading dynamics and community detection",
\newblock {\em Nat Commun}
\newblock 5, 4630, 2014.

\bibitem{Lentz2016}
Lentz H.H.K., Koher A., Hövel P., Gethmann J., Sauter-Louis C., Selhorst T. \& Conraths F.J. "Disease Spread through Animal Movements: A Static and Temporal Network Analysis of Pig Trade in Germany"
\newblock {\em PLOS ONE}
\newblock 11(5), e0155196, 2016.

\bibitem{Bois2015}
Bois F. \& Gayraud G. "Probabilistic generation of random networks taking into account information on motifs occurrence",
\newblock {\em J Comput Biol}
\newblock {22(1): 25-36, 2015.}

\bibitem{Scholtes2017} Scholtes I.  "When is a Network a Network? Multi-Order Graphical Model Selection in Pathways and Temporal Networks", in\newblock {\em Proc. 23rd ACM SIGKDD International Conference on Knowledge Discovery \& Data Mining}, KDD ‘17, 1037–1046, 2017.


\bibitem{Zhou2020}Zhou D., Zheng L., Han J. \& He J. "A Data-Driven Graph Generative Model for Temporal Interaction Networks", in \textit{Proc. 26th ACM SIGKDD International Conference on Knowledge Discovery \& Data Mining}, KDD ‘20, 401–411, 2020.


\bibitem{Zeno2021}Zeno G., La Fond T. \& Neville J. "Dymond: Dynamic motif-nodes network generative model", \textit{Proc. Web Conf. 2021}, 718–729, 2021.

\bibitem{Longa2024}Longa A., Cencetti G., Lehmann S., Passerini A. \& Lepri B. "Generating fine-grained surrogate temporal networks",  \textit{Commun Phys} 7, 22, 2024.

\bibitem{Lambiotte2019} Lambiotte R., Rosvall M. \& Scholtes I. "From networks to optimal higher-order models of complex systems", \textit{Nat. Phys.} 15, 313–320, 2019.

\bibitem{Shvydun_Van Mieghem2024} Shvydun S. \& Van Mieghem P. "System Identification for Temporal Networks", \textit{IEEE Trans. Netw. Sci. Eng.} 11(2), 1885-1895, 2024.

\bibitem{Barbosa2018}Barbosa H., Barthelemy M., Ghoshal G., James C., Lenormand M., Louail T., Menezes R., Ramasco J., Simini F. \& Tomasini M. "Human mobility: Models and applications", \textit{Phys. Rep.} 734, 1-74, 2018.

\bibitem{Chang2022}Chang B., Yang L., Sensi M., Achterberg M., Wang F., Rinaldi M. \& Van Mieghem P. "Markov Modulated Process to Model Human Mobility", \textit{Stud. Comput. Intell.} 1072, 607–618, 2022.

\bibitem{Panisson2012}Panisson A., Barrat A., Cattuto C., Van den Broeck W., Ruffo G. \& Schifanella R. "On the dynamics of human proximity for data diffusion in ad-hoc networks", \textit{Ad Hoc Netw.} 10(8), 1532-1543, 2012.

\bibitem{Mauro2022}Mauro G., Luca M., Longa A., Lepri B. \& Pappalardo L. "Generating mobility networks with generative adversarial networks", \textit{EPJ Data Sci.} 11(58), 2022.

\bibitem{Kui2018}
Kui X., Samanta A., Zhu X., Zhang S., Li Y. \& Hui P. "Energy-Aware Temporal Reachability Graphs for Time-Varying Mobile Opportunistic Networks",
\newblock {\em IEEE Trans. Veh. Technol.} 67(10), 2018.

\bibitem{Huyn2022}
Huynh N. \& Barthelemy J. "A comparative study of topological analysis and temporal network analysis of a public transport system",
\newblock {\em Int. J. Transp. Sci. Technol.}
\newblock 11(2), 392-405, 2022.

\bibitem{meanencountertimes} 
Riascos A.P. \& Sanders D.P. "Mean encounter times for multiple random walkers on networks",
\newblock {\em Phys. Rev. E}
\newblock 103, 042312, 2021.

\bibitem{markovian random walk model of epidemic spreading} 
Bestehorn M., Riascos A., Michelitsch T. \& Collet B. "A Markovian random walk model of epidemic spreading",
\newblock {\em Continuum Mech. Thermodyn}
\newblock 33, 1207–1221, 2021.

\bibitem{comprehensiveRWstudy} 
Masuda N., Porter M. \& Lambiotte R. "Random walks and diffusion on networks",
\newblock {\em Phys. Rep.} 
\newblock 716, 1-58, 2017.


\bibitem{PVM_graphspectra_second_edition}
Van Mieghem P. 
\newblock {\em Graph Spectra for Complex Networks}.
\newblock Second edition, Cambridge University Press, 2023.

\bibitem{perfanalysis}Van Mieghem P. \emph{Performance Analysis of Complex Networks and Systems}. Cambridge University Press, 2014.

\bibitem{book_abramowitz} Abramowitz M. \& Stegun I.A. \emph{Handbook of Mathematical Functions}. Dover Publications, New York, 1968.

\bibitem{book_combinatorics}Van Lint J.H. \& Wilson R.M. \emph{A Course in Combinatorics}. Second edition. Cambridge University Press, 2001.

\bibitem{co-location}
Génois M. \& Barrat A. "Can co-location be used as a proxy for face-to-face contacts?", \textit{EPJ Data Sci.} 7(11), 2018.

\end{thebibliography}

\begin{thebibliography}{5}
\bibitem{PVM_graphspectra_second_edition_app}
Van Mieghem P. 
\newblock {\em Graph Spectra for Complex Networks}.
\newblock Second edition, Cambridge University Press, 2023.

\bibitem{book_combinatorics_app}Van Lint J.H. \& Wilson R.M. \emph{A Course in Combinatorics}. Second edition. Cambridge University Press, 2001.

\bibitem{book_comtet} Comtet L. \emph{Advanced Combinatorics}, revised and enlarged edition, D. Riedel Publishing Company, 1974.

\bibitem{PVM_charcoef}
Van~Mieghem P.
\newblock {\em Characteristic Coefficients of a Complex Function}.
\newblock unpublished, 1993-2024.

\bibitem{book_abramowitz_app} Abramowitz M. \& Stegun I.A. \emph{Handbook of Mathematical Functions}. Dover Publications, New York, 1968.

\end{thebibliography}
\end{document}